\documentclass[aps,prx,reprint,groupedaddress,longbibliography,nofootinbib,floatfix]{revtex4-2}

\usepackage[T1]{fontenc}
\usepackage{lmodern}
\usepackage{amsmath,amssymb,mathtools,bm}
\usepackage{amsthm}
\usepackage{graphicx}
\usepackage{float}
\usepackage{needspace}
\usepackage{xcolor}
\usepackage{microtype}
\usepackage[colorlinks=true,allcolors=blue!55!black]{hyperref}
\hypersetup{
  pdftitle={The Locality Gap: A Thermodynamic Law for Objective Facts},
  pdfauthor={Maxim V. Churilov},
  pdfsubject={Thermodynamics of objective records under restricted control},
  pdfkeywords={quantum Darwinism, thermodynamics of information, objective records, Landauer principle, multipartite correlations}
}
\usepackage[nameinlink,noabbrev]{cleveref}

\graphicspath{{figures/}}

\newcommand{\kB}{k_{\mathrm B}}
\newcommand{\Tr}{\operatorname{Tr}}
\newcommand{\dd}{\mathrm d}
\newcommand{\Pcal}{\mathcal P}
\newcommand{\Tcorr}{\mathcal T}
\newcommand{\Aarch}{\mathcal A}
\newcommand{\Fact}{\mathfrak F}
\newcommand{\Ffree}{\mathcal F}
\newcommand{\bF}{\mathbf F}
\newcommand{\bG}{\mathbf G}

\newcommand{\Iacc}{I_{\mathrm{acc}}}
\newcommand{\pos}[1]{\left[#1\right]_{+}}

\newtheorem{theorem}{Theorem}
\newtheorem{proposition}[theorem]{Proposition}
\newtheorem{corollary}[theorem]{Corollary}

\theoremstyle{definition}

\theoremstyle{remark}

\makeatletter
\AtBeginDocument{%
  \@ifpackageloaded{hyperref}{\hypersetup{hidelinks}}{}%
}
\makeatother

\begin{document}

\title{The Locality Gap: A Thermodynamic Law for Objective Facts}

\author{Maxim V. Churilov}
\email{churilovm1305@gmail.com}
\affiliation{Independent Researcher, Orenburg, Russia}
\date{July 30, 2026}

\begin{abstract}
Objective facts in quantum Darwinism are values recorded redundantly in many independently accessible fragments of the environment.  Redundancy does not create additional logical information, so its thermodynamic meaning has remained unclear.  We show that it creates an exact work asymmetry between controllers with different access architectures.  For a record cloud $\bF$ and a partition $\Pcal$ into blocks that cannot be jointly controlled, the reversible isothermal erasure penalty relative to a global controller is
\begin{equation*}
G_{\Pcal}^{\emptyset}=\kB T\!\left[\sum_{B\in\Pcal}S(F_B)-S(\bF)\right].
\end{equation*}
If the source event $X$ is supplied as catalytic classical side information, the corresponding penalty is $G_{\Pcal}^{X}=\kB T\Tcorr_{\Pcal}(\bF|X)$.  Their difference obeys the central source-loss law
\begin{equation*}
\Fact_{\Pcal}^{T}(X)=G_{\Pcal}^{\emptyset}-G_{\Pcal}^{X}
=\kB T\!\left[\sum_{B\in\Pcal}I(X{:}F_B)-I(X{:}\bF)\right].
\end{equation*}
We call $\Fact_{\Pcal}^{T}$ the thermodynamic factuality charge; equivalently, it is temperature times the event-specific increase in entropy production under partition-local erasure.  It ranges from $-\kB T H(X)$ for perfect secret sharing to $(|\Pcal|-1)\kB T H(X)$ for perfect broadcast objectivity.  Its normalized form defines a thermodynamic record number between $0$ and $|\Pcal|$: perfect secret sharing, unique storage, and perfect broadcasting give $0$, $1$, and $|\Pcal|$, respectively.  We derive Landauer--Darwin bounds from independently decodable records, endpoint-rigidity certificates, exact growth and partition-refinement laws, spectrum-broadcast saturation, and closed formulas for noisy classical and quantum collision models.  For Hamiltonians with cross-block interactions, the entropy-only law acquires one exact interaction-energy correction bounded in magnitude by \(2\lVert H_{\rm int}\rVert\), giving a sharp sign certificate.  The theory does not modify quantum mechanics or posit objective collapse; it identifies the thermodynamic resource generated by redundant records under restricted control.
Pinsker control yields a correlation-adaptive refinement of the
interaction correction and an explicit positivity criterion beyond the
worst-case norm bound.
\end{abstract}

\maketitle

\section{Introduction}

A physical event becomes a public fact only after evidence of it is available in more than one place.  In quantum theory this intuition is made precise by decoherence, quantum Darwinism, and spectrum broadcast structure: a preferred classical variable is imprinted into many fragments of the environment, allowing observers to infer the same value without direct access to the system \cite{Zurek2003,Zurek2009,BlumeKohout2006,BlumeKohout2008,ZwolakQuanZurek2009,ZwolakRiedelZurek2014,Korbicz2014,Brandao2015,LeOlayaCastro2019,Korbicz2021,Touil2022,Touil2024,DoucetDeffner2024}.  Redundancy explains accessibility and consensus.  It does not, however, increase the Shannon information of the underlying event: $R$ copies of one bit still contain one bit of logical uncertainty.  What physical resource, then, is accumulated when one event acquires a large cloud of records?

Thermodynamics suggests that the missing resource is not an additional state variable but an asymmetry between control architectures.  Landauer erasure, reversible computation, and modern quantum thermodynamics relate minimum work to entropy and nonequilibrium free energy \cite{Landauer1961,Bennett1982,Bennett2003,SagawaUeda2009,Parrondo2015,Goold2016,ReebWolf2014,delRio2011,Faist2015,FaistRenner2018}.  Correlations can be converted into work by global control but become inaccessible to modular or local control, producing work deficits and modularity costs \cite{Oppenheim2002,Groisman2005,Horodecki2005,Boyd2018,Guan2025}.  Recent work has also quantified the work cost of quantum measurements, complexity-constrained erasure, and process erasure \cite{LatuneElouard2025,Munson2025,Badhani2026}.  These developments invite a sharper question: \emph{is there a universal thermodynamic signature that distinguishes an objective broadcast record from unique storage or secret sharing?}

We answer this question by comparing four reversible reset tasks.  The same record cloud is erased either globally or block by block, first without and then with the event label supplied as side information.  A single difference of work costs measures all correlations across the controller's partition.  A \emph{double difference} removes correlations unrelated to the event and leaves a signed, target-specific quantity:
\begin{equation}
\boxed{
\Fact_{\Pcal}^{T}(X)
=\kB T\left[\sum_{B\in\Pcal}I(X{:}F_B)-I(X{:}\bF)\right].
}
\label{eq:master}
\end{equation}
Positive values mean that summed blockwise evidence exceeds the information in the union; negative values mean that the union contains an excess unavailable to the separate blocks.  Unique storage gives zero, although zero can also arise from exact cancellation between redundant and synergistic structure.  Perfect broadcast and perfect secret sharing saturate the two universal endpoints.  Thus one thermodynamic coordinate places classical objectivity and privacy-preserving synergy at opposite ends of the same information architecture.

The equality connecting the unconditioned local--global work gap to total correlation is a specialization of correlation free energy and is not claimed here as an isolated novelty.  The contributions of the present work are its source-conditioned subtraction, the operational identification in \cref{eq:master}, its sharp extremal characterization, the Landauer--Darwin objectivity bounds, the dynamic and controller-scale chain rules, and their application to quantum record formation.  To our knowledge, this complete connection between redundant environmental objectivity and a measurable local--global erasure double difference has not previously been formulated.

\begin{figure*}[!t]
\centering
\includegraphics[width=0.98\textwidth]{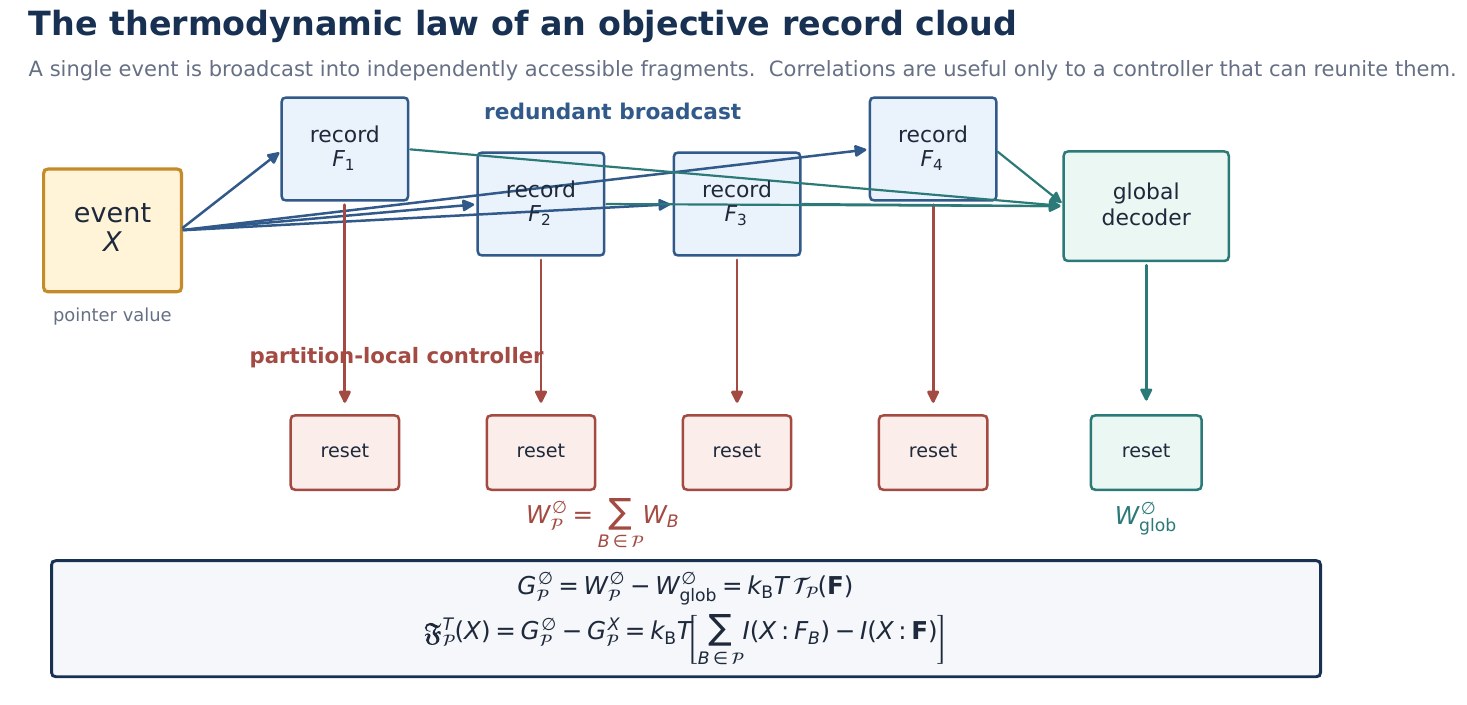}
\caption{Operational setting.  An event $X$ is broadcast into a record cloud $\bF=F_1\cdots F_R$.  A partition-local controller resets each allowed block independently, whereas a global controller first exploits cross-block correlations.  Their work difference is the locality gap $G_{\Pcal}^{\emptyset}$.  Repeating both tasks with $X$ supplied as catalytic classical side information gives $G_{\Pcal}^{X}$; the double difference is the thermodynamic factuality charge \cref{eq:master}.}
\label{fig:concept}
\end{figure*}

\paragraph{Claim map.}
The locality-gap identity, the source-loss and entropy-production forms, the endpoint bounds and rigidity identities, the Landauer--Darwin inequalities, the chain rules, and all displayed finite models are exact within their stated operational assumptions.  Causal sealing is a controller-relative extension of those finite-system quantities.  The factuality-rate criterion for pointer selection is explicitly conjectural and is not used in any proof.

\section{Record clouds and information architecture}
\label{sec:records}

Let $X$ be a finite classical event label with distribution $p_x$, and let $\bF=F_1\cdots F_R$ denote its physical record cloud.  The joint state has classical--quantum form
\begin{equation}
\rho_{X\bF}=\sum_x p_x\,|x\rangle\!\langle x|_X\otimes\rho_{\bF|x}.
\label{eq:cqstate}
\end{equation}
A controller does not necessarily possess arbitrary joint access to all records.  Its access architecture is represented by a partition
\begin{equation}
\Pcal=\{B_1,\ldots,B_m\},\qquad
\bF=F_{B_1}\cdots F_{B_m},
\end{equation}
where a block $F_B$ may be controlled coherently, but different blocks may not be jointly processed in the partition-local protocol.

All logarithms are natural, so entropies are measured in nats.  Define the partition total correlation \cite{Watanabe1960}
\begin{align}
\Tcorr_{\Pcal}(\bF)
&=\sum_{B\in\Pcal}S(F_B)-S(\bF) \notag\\
&=D\!\left(\rho_{\bF}\middle\|\bigotimes_{B\in\Pcal}\rho_{F_B}\right)\ge0,
\label{eq:totalcorr}
\end{align}
and its conditional counterpart
\begin{equation}
\Tcorr_{\Pcal}(\bF|X)
=\sum_xp_x\left[\sum_{B\in\Pcal}S(F_B)_{\rho_x}-S(\bF)_{\rho_x}\right]\ge0.
\label{eq:condtotalcorr}
\end{equation}
The target-specific signed architecture charge is
\begin{equation}
\Aarch_X^{\Pcal}(\bF)
=\sum_{B\in\Pcal}I(X{:}F_B)-I(X{:}\bF).
\label{eq:architecture}
\end{equation}
For $m=2$, this is the interaction information $I(F_{B_1}{:}F_{B_2})-I(F_{B_1}{:}F_{B_2}|X)$.  For general $m$ it is a target-specific multivariate quantity, related in spirit but not identical to a unique partial-information decomposition or to the O-information \cite{Rosas2019}.

\begin{proposition}[Source-loss decomposition]
For every state of the form \cref{eq:cqstate} and every partition $\Pcal$,
\begin{equation}
\boxed{
\Tcorr_{\Pcal}(\bF)
=\Aarch_X^{\Pcal}(\bF)+\Tcorr_{\Pcal}(\bF|X).
}
\label{eq:decomposition}
\end{equation}
Moreover,
\begin{equation}
-H(X)\le\Aarch_X^{\Pcal}(\bF)\le(m-1)H(X).
\label{eq:architecturebounds}
\end{equation}
\end{proposition}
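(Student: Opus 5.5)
The identity \cref{eq:decomposition} is pure entropy bookkeeping, so I will expand both sides in terms of unconditional and conditional von Neumann entropies. For the classical--quantum state \cref{eq:cqstate}, the conditional entropy of any subsystem $F_A$ given $X$ is $S(F_A|X)=\sum_x p_x S(F_A)_{\rho_x}$, and the mutual information is $I(X{:}F_A)=S(F_A)-S(F_A|X)$. Substituting these expressions gives
\begin{equation*}
\Aarch_X^{\Pcal}(\bF)=\sum_{B}S(F_B)-S(\bF)-\Big[\sum_B S(F_B|X)-S(\bF|X)\Big].
\end{equation*}
The first bracket is $\Tcorr_{\Pcal}(\bF)$ by \cref{eq:totalcorr}. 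The second is exactly $\Tcorr_{\Pcal}(\bF|X)$ as defined in \cref{eq:condtotalcorr}. Rearranging yields the boxed identity. No inequality is needed at this stage; the only thing to check is that the conditional entropies of the cq state are the $p_x$-averages, which follows from the block-diagonal structure in $X$.

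For the upper bound in \cref{eq:architecturebounds}, I will bound each term $I(X{:}F_B)\le H(X)$. This holds because $X$ is classical: $I(X{:}F_B)=H(X)-H(X|F_B)$, and $H(X|F_B)\ge0$ for a cq state, since conditional entropy of a classical register is nonnegative. Summing over the $m$ blocks and using $I(X{:}\bF)\ge0$ only gives $mH(X)$, so this crude approach is not enough. Instead, I will keep one block: by data processing (monotonicity of mutual information under partial trace), $I(X{:}\bF)\ge I(X{:}F_{B_1})$. Hence $\Aarch\le\sum_{B\ne B_1}I(X{:}F_B)\le(m-1)H(X)$.

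For the lower bound, I will use the decomposition just proved: $\Aarch=\Tcorr_{\Pcal}(\bF)-\Tcorr_{\Pcal}(\bF|X)\ge-\Tcorr_{\Pcal}(\bF|X)$. That route requires bounding the conditional total correlation, which can be large, so it is the wrong way in. The direct route works: since every $I(X{:}F_B)\ge0$, we get $\Aarch\ge -I(X{:}\bF)\ge -H(X)$, using again $I(X{:}\bF)=H(X)-H(X|\bF)\le H(X)$ for classical $X$.

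The only genuinely nontrivial ingredient is the nonnegativity of $H(X|\cdot)$ when $X$ is a classical register. I will justify it by writing $S(X\bF)=H(X)+\sum_x p_x S(\rho_{\bF|x})$, which makes $S(X\bF)-S(\bF)\ge0$ by concavity of entropy. I therefore expect no real obstacle; the step most easily mishandled is choosing the data-processing reduction to get $(m-1)$ rather than $m$ in the upper bound.
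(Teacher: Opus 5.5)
Your proof follows the paper's own argument in Appendix A essentially step for step. You substitute $I(X{:}F_A)=S(F_A)-\sum_x p_x S(F_A)_{\rho_x}$ into $\Aarch_X^{\Pcal}$ to get the decomposition. The lower bound comes from $I(X{:}F_B)\ge0$ and $I(X{:}\bF)\le H(X)$. The upper bound comes from anchoring one block by data processing; the paper picks a maximizing block, but any anchor works.

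\textbf{The flaw.} It is in the step you single out as the only nontrivial ingredient. Write $\bar\rho=\sum_x p_x\rho_{\bF|x}$. You need
\[
S(X\bF)-S(\bF)=H(X)+\sum_x p_x S(\rho_{\bF|x})-S(\bar\rho)\ge0,
\]
which is an \emph{upper} bound $S(\bar\rho)\le H(X)+\sum_x p_x S(\rho_{\bF|x})$. Concavity gives the opposite inequality, $S(\bar\rho)\ge\sum_x p_x S(\rho_{\bF|x})$. That only yields $I(X{:}\bF)\ge0$, not $I(X{:}\bF)\le H(X)$, so as written the key fact is not justified.

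\textbf{The repair.} Either of the following gives the needed bound.
\begin{itemize}
\item Use the mixing upper bound, equivalently the Holevo bound $\chi\le H(X)$. Spectrally decompose each $\rho_{\bF|x}=\sum_j\lambda_{j|x}|e_{j|x}\rangle\langle e_{j|x}|$. Then $\bar\rho$ is a mixture of pure states with weights $p_x\lambda_{j|x}$, and its entropy is at most the Shannon entropy of those weights. That Shannon entropy equals $H(X)+\sum_x p_x S(\rho_{\bF|x})$.
\item Alternatively, use only data processing, which you already invoke. Adjoin a classical copy $X'$ of $X$. Both $\rho_{X'\bF}$ and $\rho_{XX'\bF}$ are block diagonal with entropy $H(X)+\sum_x p_x S(\rho_{\bF|x})$. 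Hence $I(X{:}X'\bF)=H(X)$, and tracing out $X'$ gives $I(X{:}\bF)\le I(X{:}X'\bF)=H(X)$.
\end{itemize}
The paper takes this standard fact for granted. With the justification repaired, your proof is complete.
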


The decomposition separates correlations attributable to the common event $X$ from correlations that remain even after $X$ is specified.  The second term is a nonnegative conditional background.  The first can have either sign: redundant copies make it positive, while secret-sharing structures make it negative.  Proofs of \cref{eq:decomposition,eq:architecturebounds} and all equality conditions are given in \cref{app:information}.

When $H(X)>0$, define the normalized architecture coordinate
\begin{equation}
\phi_X^{\Pcal}=\frac{\Aarch_X^{\Pcal}}{H(X)}\in[-1,m-1].
\label{eq:phi}
\end{equation}
The shifted coordinate
\begin{equation}
\boxed{
\mathcal N_X^{\Pcal}
\equiv 1+\phi_X^{\Pcal}
=1+\frac{\Aarch_X^{\Pcal}}{H(X)}
\in[0,m]
}
\label{eq:recordnumber}
\end{equation}
will be called the \emph{thermodynamic record number}.  It is generally noninteger.  It equals $m$ for perfect broadcasting across $m$ blocks, $1$ for unique storage, and $0$ for perfect secret sharing.  Its thermodynamic meaning is established next.

\section{Thermodynamic reset under restricted control}
\label{sec:thermo}

\subsection{Operational assumptions}

The work identities are exact within a deliberately explicit reset model.  The record Hamiltonian is block additive,
\begin{equation}
H_{\bF}=\sum_{B\in\Pcal}H_{F_B},
\end{equation}
and the target blank is a product $\omega_{\bF}=\bigotimes_{B\in\Pcal}\omega_{F_B}$.  The nonequilibrium free energy at bath temperature $T$ is
\begin{equation}
\Ffree_T(\rho)=\Tr(H\rho)-\kB T S(\rho).
\label{eq:freeenergy}
\end{equation}
In the reversible isothermal limit, the minimum average work needed to transform $\rho$ into $\omega$ is $\Ffree_T(\omega)-\Ffree_T(\rho)$.  This is the standard asymptotic or quasistatic regime; one-shot, finite-bath, finite-time, and complexity restrictions generally add corrections \cite{ReebWolf2014,Faist2015,FaistRenner2018,Munson2025,Badhani2026}.

A \emph{partition-local reset} is a tensor product of block protocols with no cross-block communication, coherent gate, shared memory, or feed-forward that could exploit correlations between distinct blocks.  Thermal ancillas, work stores, and controller memories begin product across blocks and may not be retained as a hidden shared resource.  A \emph{global reset} may reversibly compress the entire cloud before erasure.  These are two physically different controller architectures, not two descriptions of the same unrestricted operation.

\subsection{The locality gap}

The minimum partition-local and global reset works are
\begin{align}
W_{\Pcal}^{\emptyset}
&=\sum_{B\in\Pcal}\left[\Ffree_T(\omega_{F_B})-\Ffree_T(\rho_{F_B})\right],
\label{eq:wlocal}\\
W_{\mathrm{glob}}^{\emptyset}
&=\Ffree_T(\omega_{\bF})-\Ffree_T(\rho_{\bF}).
\label{eq:wglobal}
\end{align}
The superscript $\emptyset$ denotes absence of side information about the source event.

\begin{theorem}[Locality-gap law]
Under the assumptions above,
\begin{equation}
\boxed{
G_{\Pcal}^{\emptyset}
\equiv W_{\Pcal}^{\emptyset}-W_{\mathrm{glob}}^{\emptyset}
=\kB T\,\Tcorr_{\Pcal}(\bF).
}
\label{eq:localitygap}
\end{equation}
\end{theorem}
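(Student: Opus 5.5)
The identity follows by direct substitution of the free-energy expression \cref{eq:freeenergy} into the two work formulas \cref{eq:wlocal,eq:wglobal}, once we check that all energy terms cancel. I would first justify that \cref{eq:wlocal,eq:wglobal} are the correct minima. The global value is the standard reversible isothermal work $\Ffree_T(\omega_{\bF})-\Ffree_T(\rho_{\bF})$. For the partition-local value, a tensor product of block protocols, with ancillas, work stores and memories initially product across blocks and not shared, acts on each block independently. Each block can see only its marginal $\rho_{F_B}$ and must output $\omega_{F_B}$. The minimal cost per block is therefore $\Ffree_T(\omega_{F_B})-\Ffree_T(\rho_{F_B})$, and average works add because the work stores are separate. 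Achievability holds blockwise by the standard quasistatic protocol, so the sum is attained.

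Next I would expand the energy contributions. Because $H_{\bF}=\sum_B H_{F_B}$, we have $\Tr(H_{\bF}\rho_{\bF})=\sum_B\Tr(H_{F_B}\rho_{F_B})$, since the expectation of a block-local operator depends only on the marginal. The same holds for $\omega_{\bF}$. In addition, because $\omega_{\bF}=\bigotimes_B\omega_{F_B}$ is a product, entropy additivity gives $S(\omega_{\bF})=\sum_B S(\omega_{F_B})$. Hence $\Ffree_T(\omega_{\bF})=\sum_B\Ffree_T(\omega_{F_B})$, so the target contributions to $W_{\Pcal}^{\emptyset}$ and $W_{\mathrm{glob}}^{\emptyset}$ coincide. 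The energies of the initial state also cancel. Subtracting,
\begin{equation*}
G_{\Pcal}^{\emptyset}=\Ffree_T(\rho_{\bF})-\sum_B\Ffree_T(\rho_{F_B})=\kB T\Bigl[\sum_B S(F_B)-S(\bF)\Bigr],
\end{equation*}
which is $\kB T\,\Tcorr_{\Pcal}(\bF)$ by \cref{eq:totalcorr}. Nonnegativity follows from the relative-entropy form in \cref{eq:totalcorr}.

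The only step I expect to be delicate is the optimality claim for the partition-local protocol. One must argue that no product-structured protocol can beat the sum of marginal free-energy differences. I would argue this by applying the second law to each block together with its own bath share and ancillas: the block's reduced dynamics is a thermal operation on $F_B$ alone. Nothing shared across blocks (no communication, common memory or feed-forward) is available to transfer correlation free energy between blocks. Since $\Ffree_T$ is non-increasing under such thermal maps, the blockwise bound follows. Everything after that is the algebraic cancellation above.
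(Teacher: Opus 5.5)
Your proposal is correct and matches the paper's proof in Appendix~\ref{app:thermo}: additivity of $H_{\bF}$ and the product blank cancel all energy and blank terms, leaving $\Ffree_T(\rho_{\bF})-\sum_{B}\Ffree_T(\rho_{F_B})=\kB T\,\Tcorr_{\Pcal}(\bF)$. The only difference is that you also sketch why \cref{eq:wlocal,eq:wglobal} are the true minima, using a blockwise second-law bound and quasistatic achievability; the paper simply adopts these as part of its stated operational assumptions.
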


The energy terms cancel because both the Hamiltonian and blank are additive.  The remaining excess work is exactly the correlation free energy inaccessible to the partition-local controller.  Equation \eqref{eq:localitygap} is closely related to thermodynamic work deficit and modularity dissipation \cite{Oppenheim2002,Groisman2005,Horodecki2005,Boyd2018,Guan2025}.  Its role here is to provide the thermodynamic backbone for a target-specific subtraction.

\subsection{Source-assisted reset and the law of factuality}

Suppose the event label $X$ remains available as classical side information.  Operationally, the reset protocol is selected conditionally on $x$ while the source register is returned unchanged.  Both assisted baselines receive the same set of classical control copies and return them unchanged.  Communication, fanout, or storage used to supply those copies is therefore held fixed between the local and global comparisons and is not included in the work difference.  The conditional minimum works are
\begin{align}
W_{\Pcal}^{X}
&=\sum_xp_x\sum_{B\in\Pcal}
\left[\Ffree_T(\omega_{F_B})-\Ffree_T(\rho_{F_B|x})\right],\\
W_{\mathrm{glob}}^{X}
&=\sum_xp_x
\left[\Ffree_T(\omega_{\bF})-\Ffree_T(\rho_{\bF|x})\right].
\end{align}
Hence
\begin{equation}
G_{\Pcal}^{X}
\equiv W_{\Pcal}^{X}-W_{\mathrm{glob}}^{X}
=\kB T\,\Tcorr_{\Pcal}(\bF|X).
\label{eq:conditionalgap}
\end{equation}

\Needspace{8\baselineskip}
\begin{theorem}[Source-loss law of factuality]
The change in the locality gap when the source event ceases to be available is
\begin{align}
\Fact_{\Pcal}^{T}(X{:}\bF)
&\equiv G_{\Pcal}^{\emptyset}-G_{\Pcal}^{X} \notag\\
&=\boxed{\kB T\left[\sum_{B\in\Pcal}I(X{:}F_B)-I(X{:}\bF)\right]}.
\label{eq:factuality}
\end{align}
It obeys the sharp bounds
\begin{equation}
\boxed{
-\kB T H(X)
\le\Fact_{\Pcal}^{T}(X{:}\bF)
\le(m-1)\kB T H(X).
}
\label{eq:bounds}
\end{equation}
\end{theorem}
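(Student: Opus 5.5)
The plan is to reduce the theorem entirely to the two gap identities already in hand and to the source-loss decomposition of the Proposition. By the Locality-gap law \cref{eq:localitygap}, $G_{\Pcal}^{\emptyset}=\kB T\,\Tcorr_{\Pcal}(\bF)$. By \cref{eq:conditionalgap}, $G_{\Pcal}^{X}=\kB T\,\Tcorr_{\Pcal}(\bF|X)$. The latter identity is itself checked termwise in $x$: block additivity of $H_{\bF}$ and of the blank $\omega_{\bF}$ makes the energy contributions $\Tr(H\omega)$ and $\sum_x p_x\Tr(H\rho_{\bF|x})=\Tr(H\rho_{\bF})$ cancel between local and global protocols, leaving only $\kB T\sum_x p_x[\sum_B S(F_B)_{\rho_x}-S(\bF)_{\rho_x}]$. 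Subtracting the two gives
\begin{equation*}
\Fact_{\Pcal}^{T}(X{:}\bF)=\kB T\left[\Tcorr_{\Pcal}(\bF)-\Tcorr_{\Pcal}(\bF|X)\right].
\end{equation*}
The Proposition, \cref{eq:decomposition}, then identifies the bracket with $\Aarch_X^{\Pcal}(\bF)$, which is the boxed expression \cref{eq:factuality}.

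For completeness I would also verify \cref{eq:decomposition} directly rather than only citing it. For a cq state, $I(X{:}F_B)=S(F_B)-\sum_xp_xS(F_B)_{\rho_x}$ and $I(X{:}\bF)=S(\bF)-\sum_xp_xS(\bF)_{\rho_x}$. Summing the first over $B$ and subtracting the second yields exactly $\Tcorr_{\Pcal}(\bF)-\Tcorr_{\Pcal}(\bF|X)$. This step is routine bookkeeping.

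The bounds \cref{eq:bounds} then follow by multiplying \cref{eq:architecturebounds} by $\kB T$. If I had to prove them from scratch, the lower bound would come from $\Tcorr_{\Pcal}(\bF)\ge0$ together with $\Tcorr_{\Pcal}(\bF|X)\le \Tcorr_{\Pcal}(\bF)+H(X)$. The cleaner route is to write $\Aarch\ge -I(X{:}\bF)\ge -H(X)$, using nonnegativity of each $I(X{:}F_B)$ and the fact that for cq states $I(X{:}\bF)\le H(X)$. For the upper bound, I would pick any block $B_1$. By monotonicity of mutual information under partial trace, $I(X{:}\bF)\ge I(X{:}F_{B_1})$, so $\Aarch\le\sum_{B\ne B_1}I(X{:}F_B)\le(m-1)H(X)$, again using $I(X{:}F_B)\le H(X)$ for classical $X$.

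The main point requiring care is sharpness, i.e.\ exhibiting states attaining both endpoints. For the upper endpoint I would take perfect broadcast: each $F_B$ holds an orthogonal copy $|x\rangle$, so every $I(X{:}F_B)=I(X{:}\bF)=H(X)$ and $\Aarch=(m-1)H(X)$. For the lower endpoint I would use perfect secret sharing. With $X$ uniform on $\mathbb Z_d$, the blocks hold $m-1$ independent uniform shares plus their sum offset by $x$. Then each $F_B$ (and indeed any proper subset) is independent of $X$, while $\bF$ determines $X$, so $\Aarch=-H(X)$. For nonuniform $X$ the same construction still gives marginal independence because the shares are uniform. I expect no genuine obstacle beyond ensuring the classical-side-information bound $I(X{:}\cdot)\le H(X)$ is invoked correctly. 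Equality conditions are deferred to \cref{app:information} as the paper indicates.
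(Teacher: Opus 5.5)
Your proposal is correct and follows the paper's own proof: the two gap identities combined with the source-loss decomposition give the boxed formula, and the bounds follow from $I(X{:}F_B)\ge0$, $I(X{:}\bF)\le H(X)$ and data processing exactly as in \cref{app:information} (an arbitrary anchor block works just as well as the paper's maximizing block $B_*$). Your $\mathbb Z_d$ share construction is a slightly more general sharpness witness than the paper's parity example, and, like the paper's endpoint discussion, it tacitly requires $m\ge2$, since for $m=1$ one has $\Fact_{\Pcal}^{T}=0$ and the lower bound is not attained when $H(X)>0$.
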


\begin{corollary}[Entropy-production form]
Using the globally reversible reset as the zero-production baseline, define $\Tcorr_{\Pcal}(\bF|\emptyset)\equiv\Tcorr_{\Pcal}(\bF)$.  For $Y\in\{\emptyset,X\}$,
\begin{equation}
\Sigma_{\Pcal}^{Y}=\frac{G_{\Pcal}^{Y}}{T}
=\kB\Tcorr_{\Pcal}(\bF|Y),
\end{equation}
so
\begin{equation}
\boxed{\Fact_{\Pcal}^{T}
=T\left(\Sigma_{\Pcal}^{\emptyset}-\Sigma_{\Pcal}^{X}\right).}
\label{eq:entropyproduction}
\end{equation}
Thus factuality is temperature times the event-specific increase in unavoidable entropy production caused by source loss.
\end{corollary}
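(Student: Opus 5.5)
The corollary follows by direct substitution; the work is in fixing conventions consistently. I will first fix the entropy-production bookkeeping. With the globally reversible reset as the zero-production baseline, the dissipated work of a partition-local reset with side information $Y$ is its excess over the corresponding global work, $W_{\Pcal}^{Y}-W_{\mathrm{glob}}^{Y}=G_{\Pcal}^{Y}$. In an isothermal process with a single bath at temperature $T$, dissipated work equals $T$ times the entropy production. This gives $\Sigma_{\Pcal}^{Y}=G_{\Pcal}^{Y}/T$, which I take as the defining relation.

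For $Y=\emptyset$, I will invoke the locality-gap law \cref{eq:localitygap}, $G_{\Pcal}^{\emptyset}=\kB T\,\Tcorr_{\Pcal}(\bF)$. By the stated convention $\Tcorr_{\Pcal}(\bF|\emptyset)\equiv\Tcorr_{\Pcal}(\bF)$, this yields $\Sigma_{\Pcal}^{\emptyset}=\kB\Tcorr_{\Pcal}(\bF|\emptyset)$.

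For $Y=X$, I will use \cref{eq:conditionalgap}, $G_{\Pcal}^{X}=\kB T\,\Tcorr_{\Pcal}(\bF|X)$. Its justification is the same cancellation as in the locality-gap law, applied branch by branch: for each $x$ the Hamiltonian and blank are block additive, so the energy terms $\Tr(H\rho)$ cancel between local and global costs. Averaging over $p_x$ then leaves $\kB T\sum_x p_x[\sum_B S(F_B)_{\rho_x}-S(\bF)_{\rho_x}]$.

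Subtracting the two cases gives $T(\Sigma_{\Pcal}^{\emptyset}-\Sigma_{\Pcal}^{X})=G_{\Pcal}^{\emptyset}-G_{\Pcal}^{X}$, which is $\Fact_{\Pcal}^{T}$ by its definition in \cref{eq:factuality}. The information-theoretic form follows from the source-loss decomposition \cref{eq:decomposition}: $\Tcorr_{\Pcal}(\bF)-\Tcorr_{\Pcal}(\bF|X)=\Aarch_X^{\Pcal}(\bF)$.

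The only real obstacle is justifying that the baseline subtraction is legitimate in the assisted case. The global assisted reset must itself be reversible, with zero entropy production. This holds because conditioning on a classical, catalytically returned $x$ lets the controller run the optimal reversible protocol in each branch, and the cost of supplying the control copies is identical in the local and global comparisons and drops out, as stipulated in the operational assumptions. I will state this explicitly so that $\Sigma_{\Pcal}^{X}$ measures only the extra production due to locality.
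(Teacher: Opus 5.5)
Your proposal is correct and follows the paper's own argument. The paper likewise takes entropy production to be $(W-\Delta\Ffree_T)/T$, observes that the global baseline (branchwise-conditional in the assisted case) attains the free-energy change reversibly so that $\Sigma_{\Pcal}^{Y}=G_{\Pcal}^{Y}/T$, and then substitutes \cref{eq:localitygap,eq:conditionalgap} and subtracts; your explicit justification of the zero-production assisted baseline, with the control copies held fixed across comparisons, usefully spells out the paper's one-line step.
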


We call $\Fact_{\Pcal}^{T}$ the \emph{thermodynamic factuality charge}.  It is a signed work double difference, not stored internal energy.  At the upper endpoint, every block individually contains a complete record of $X$; at the lower endpoint, no block contains any information about $X$, while the union contains all of it.  For $m\ge2$ and a cq state with all $p_x>0$, upper saturation is equivalent to pairwise orthogonality of the conditional supports on every block.  Lower saturation is equivalent to $\rho_{F_B|x}$ being independent of $x$ for every block and the global conditional states being perfectly distinguishable.  Thus the two endpoints are exact thermodynamic signatures of perfect broadcast and perfect secret sharing, respectively.

The adjective ``factual'' is reserved for the positive, redundancy-dominated regime, while operational objectivity additionally requires independently accessible decoders as imposed below.  Negative values are not violations of thermodynamics; they say that revealing $X$ \emph{creates} conditional cross-block correlations that were absent without the key, precisely as in synergistic encodings.

\begin{theorem}[Endpoint rigidity]
Let $m\ge2$ and define the upper and lower thermodynamic deficits
\begin{equation}
\Delta_+=(m-1)\kB T H(X)-\Fact_{\Pcal}^{T},
\qquad
\Delta_-=\Fact_{\Pcal}^{T}+\kB T H(X).
\end{equation}
For every anchor block $B_0\in\Pcal$,
\begin{align}
\frac{\Delta_+}{\kB T}
&=\sum_{B\ne B_0}\left[H(X)-I(X{:}F_B)\right]
+I(X{:}\bF)-I(X{:}F_{B_0}),
\label{eq:upperrigidity}\\
\frac{\Delta_-}{\kB T}
&=\sum_{B\in\Pcal}I(X{:}F_B)
+H(X)-I(X{:}\bF).
\label{eq:lowerrigidity}
\end{align}
Every term on the right is nonnegative.  Consequently, $\Delta_+\le\kB T\varepsilon$ certifies $I(X{:}F_B)\ge H(X)-\varepsilon$ and $I(X{:}\bF)-I(X{:}F_B)\le\varepsilon$ for every block.  Likewise, $\Delta_-\le\kB T\varepsilon$ certifies $I(X{:}F_B)\le\varepsilon$ for every block and $I(X{:}\bF)\ge H(X)-\varepsilon$.
\end{theorem}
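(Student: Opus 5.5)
The plan is to substitute the source-loss law \cref{eq:factuality} into the definitions of $\Delta_\pm$, simplify algebraically to reach \cref{eq:upperrigidity,eq:lowerrigidity}, and then read off the certificates from nonnegativity of each term.

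\textbf{Upper deficit.} Writing $\Fact_{\Pcal}^{T}/\kB T=\sum_{B}I(X{:}F_B)-I(X{:}\bF)$, we get
\begin{equation*}
\frac{\Delta_+}{\kB T}=(m-1)H(X)-\sum_{B\in\Pcal}I(X{:}F_B)+I(X{:}\bF).
\end{equation*}
Fix an anchor $B_0$. We split the sum into the $m-1$ blocks $B\ne B_0$ and the single block $B_0$, and pair each of the $m-1$ copies of $H(X)$ with one $I(X{:}F_B)$, $B\neq B_0$. This gives exactly \cref{eq:upperrigidity}.

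\textbf{Lower deficit.} Adding $H(X)$ to $\Fact_{\Pcal}^{T}/\kB T$ gives \cref{eq:lowerrigidity} immediately, after regrouping $H(X)-I(X{:}\bF)$.

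\textbf{Nonnegativity of each term.} Since $X$ is classical in the cq state \cref{eq:cqstate}, we have $I(X{:}F_B)\le H(X)$ and $I(X{:}\bF)\le H(X)$. This holds because $I(X{:}F)=H(X)-S(X|F)$, and $S(X|F)\ge0$ when $X$ is classical, since the conditional entropy is a $p_x$-average of nonnegative entropies after conditioning in the cq form; equivalently, the Holevo quantity is bounded by $H(X)$. Nonnegativity of the mutual information itself gives $I(X{:}F_B)\ge0$. The remaining term $I(X{:}\bF)-I(X{:}F_{B_0})\ge0$ follows from monotonicity of mutual information under partial trace over $F_{\bar B_0}$, i.e.\ data processing, or equivalently strong subadditivity.

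\textbf{Certificates.} If $\Delta_+\le\kB T\varepsilon$, then each nonnegative summand is at most $\varepsilon$. This gives $H(X)-I(X{:}F_B)\le\varepsilon$ for every $B\ne B_0$, and $I(X{:}\bF)-I(X{:}F_{B_0})\le\varepsilon$. Because the anchor $B_0$ is arbitrary, we apply the identity once with each block as anchor. This yields both inequalities for every block; note that $m\ge2$ ensures every block appears as a non-anchor for some choice of anchor. The lower certificate is read off term by term directly from \cref{eq:lowerrigidity}.

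The only step needing care is the bound $I(X{:}F)\le H(X)$ for the classical--quantum state, which fails for quantum $X$. We justify it via the cq structure, and otherwise everything is bookkeeping.
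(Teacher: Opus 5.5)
Your proof is correct and is essentially the paper's: substitute the source-loss law, rearrange around the anchor, and use $0\le I(X{:}F_B)\le H(X)$ together with data processing for $I(X{:}\bF)\ge I(X{:}F_{B_0})$; your anchor-rotation step (using $m\ge2$) makes explicit the certificate consequences that the paper only asserts. One minor slip: a $p_x$-average of conditional entropies is $S(F|X)$, not $S(X|F)$, so the bound $I(X{:}F)\le H(X)$ should rest on the Holevo bound you also cite, i.e.\ $S\bigl(\sum_x p_x\rho_{F|x}\bigr)\le H(X)+\sum_x p_x S(\rho_{F|x})$.
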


The endpoints are therefore rigid rather than merely extremal: a charge close to the broadcast ceiling forces every block to be an almost complete record, while a charge close to the secret-sharing floor forces every block to be almost ignorant although the union is almost complete.  The identities are proved in \cref{app:information}.

\section{Landauer--Darwin bounds and exact broadcast saturation}
\label{sec:bounds}

Quantum Darwinism quantifies redundancy by counting disjoint fragments from which the pointer value can be inferred with a prescribed information deficit \cite{Zurek2009,ZwolakRiedelZurek2014,Korbicz2021,Touil2022,Chisholm2024}.  We call $X$ an $(R,\delta)$-public record relative to the controller architecture if there exist $R$ disjoint blocks $B_1,\ldots,B_R$ such that
\begin{equation}
\Iacc(X{:}F_{B_j})\ge(1-\delta)H(X),
\qquad j=1,\ldots,R.
\label{eq:redundancycondition}
\end{equation}
Because $I(X{:}F_{B_j})\ge\Iacc(X{:}F_{B_j})$ and $I(X{:}\bF)\le H(X)$, \cref{eq:factuality} gives
\begin{equation}
\boxed{
\frac{\Fact_{\Pcal}^{T}}{\kB T}
\ge\left[R(1-\delta)-1\right]H(X).
}
\label{eq:landauerdarwin}
\end{equation}
This is the \emph{Landauer--Darwin bound}.  In terms of \cref{eq:recordnumber}, the same result reads $\mathcal N_X^{\Pcal}\ge R(1-\delta)$: $R$ independently decodable fragments force at least $R(1-\delta)$ thermodynamic records, without requiring perfect copies.  For the unconditioned locality gap, nonnegativity of total correlation yields
\begin{equation}
\frac{G_{\Pcal}^{\emptyset}}{\kB T}
\ge\pos{R(1-\delta)-1}H(X).
\label{eq:rawbound}
\end{equation}
The positive part is required in \cref{eq:rawbound} but not in the signed factuality bound \cref{eq:landauerdarwin}.

A directly operational variant follows from Fano's inequality.  If $X$ has $d_X$ values and each block admits a decoder with error probability $p_{e,B}$, then \cite{CoverThomas2006}
\begin{equation}
I(X{:}F_B)\ge H(X)-h_2(p_{e,B})-p_{e,B}\ln(d_X-1),
\end{equation}
so
\begin{align}
\frac{\Fact_{\Pcal}^{T}}{\kB T}
&\ge(m-1)H(X)\notag\\
&\quad-\sum_{B\in\Pcal}
\left[h_2(p_{e,B})+p_{e,B}\ln(d_X-1)\right].
\label{eq:fano}
\end{align}
For a binary event and a common error bound $\epsilon$, this becomes $(m-1)H(X)-m h_2(\epsilon)$.

The ideal limit is a spectrum broadcast structure
\begin{equation}
\rho_{X\bF}=\sum_xp_x|x\rangle\!\langle x|
\otimes\bigotimes_{j=1}^{R}\rho_{F_j|x},
\quad
\rho_{F_j|x}\rho_{F_j|x'}=0\;(x\ne x').
\label{eq:sbs}
\end{equation}
Conditional independence makes $\Tcorr(\bF|X)=0$, while orthogonality gives $I(X{:}F_j)=I(X{:}\bF)=H(X)$.  Consequently,
\begin{equation}
\boxed{
G_{\Pcal}^{\emptyset}=\Fact_{\Pcal}^{T}
=(R-1)\kB T H(X).
}
\label{eq:sbssaturation}
\end{equation}
For one unbiased bit, every independently controlled record after the first contributes exactly one Landauer unit, $\kB T\ln2$, to the local--global work asymmetry.

Importantly, maximal factuality does not require conditional independence: it requires complete local records.  Conditional correlations contribute additionally to $G_{\Pcal}^{\emptyset}$ but cancel from $\Fact_{\Pcal}^{T}$.  The source-loss subtraction therefore isolates the architecture of evidence about $X$ from unrelated multipartite structure.

\section{Growth laws and thermodynamic relativity of control}
\label{sec:dynamics}

\subsection{Adding one record}

For singleton blocks, write $\bF_n=F_1\cdots F_n$.  Total correlation obeys the exact chain rule
\begin{equation}
\Tcorr(\bF_{n+1})-\Tcorr(\bF_n)
=I(F_{n+1}{:}\bF_n),
\label{eq:tcincrement}
\end{equation}
which translates into an incremental locality cost
\begin{equation}
G_{n+1}^{\emptyset}-G_n^{\emptyset}
=\kB T I(F_{n+1}{:}\bF_n).
\label{eq:workincrement}
\end{equation}
The factuality increment is
\begin{align}
\frac{\Fact_{n+1}^{T}-\Fact_n^{T}}{\kB T}
&=I(X{:}F_{n+1})-I(X{:}F_{n+1}|\bF_n)
\label{eq:factincrementa}\\
&=I(F_{n+1}{:}\bF_n)-I(F_{n+1}{:}\bF_n|X).
\label{eq:factincrementb}
\end{align}
The sign is not fixed for arbitrary states.  In a branching or conditional-product record process, however,
$I(F_{n+1}{:}\bF_n|X)=0$, and each added fragment increases factuality by the mutual information it shares with the existing record cloud.  This gives a precise monotonicity law for idealized broadcast formation without assuming that each record is perfect.

\subsection{Changing controller resolution}

Let a fine partition be coarsened by merging two blocks $B$ and $C$.  The same physical state then satisfies
\begin{align}
G_{\mathrm{fine}}^{\emptyset}-G_{\mathrm{coarse}}^{\emptyset}
&=\kB T I(F_B{:}F_C),
\label{eq:partitionraw}\\
G_{\mathrm{fine}}^{X}-G_{\mathrm{coarse}}^{X}
&=\kB T I(F_B{:}F_C|X),
\label{eq:partitioncond}\\
\Fact_{\mathrm{fine}}^{T}-\Fact_{\mathrm{coarse}}^{T}
&=\kB T\left[I(F_B{:}F_C)-I(F_B{:}F_C|X)\right].
\label{eq:partitionfact}
\end{align}
Thus thermodynamic factuality is relative to the set of records a controller can jointly manipulate.  This is not observer dependence in the epistemic sense: the state and all entropies are fixed, while the allowed operation algebra changes.

For $R$ perfect copies grouped into $m$ nonempty controllable blocks,
\begin{equation}
\Fact_{\Pcal}^{T}=(m-1)\kB T H(X),
\label{eq:blockbroadcast}
\end{equation}
regardless of the number of copies inside each block.  One global block has zero locality gap; $R$ isolated blocks have the maximal $(R-1)\kB T H(X)$ gap.

\begin{figure}[!tbp]
\centering
\includegraphics[width=\columnwidth]{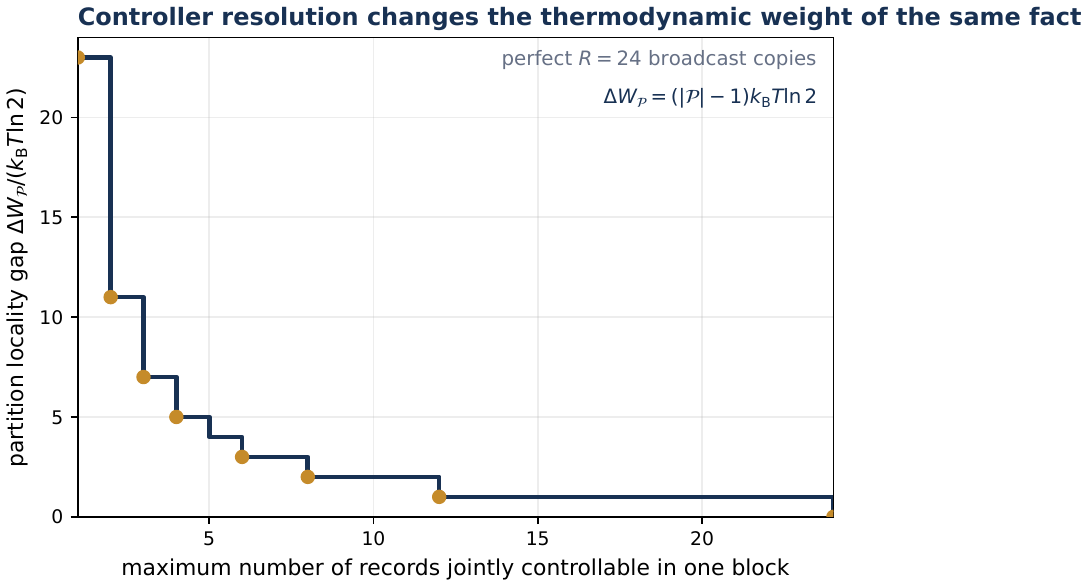}
\caption{Controller-scale flow for $24$ perfect records of one bit.  If at most $b$ records can be jointly processed, the minimum number of blocks is $\lceil24/b\rceil$, and the locality gap is $(\lceil24/b\rceil-1)\kB T\ln2$.}
\label{fig:partition}
\end{figure}

\section{Broadcast--secret-sharing duality}
\label{sec:duality}

\Cref{eq:bounds} places three qualitatively different information architectures on one exact thermodynamic axis.

\emph{Perfect broadcast:} every block determines $X$, so $\phi_X=m-1$.  The event is locally public and maximally expensive to erase without reuniting the records.

\emph{Unique storage:} exactly one block carries $X$ and the others carry no correlated evidence, so $\phi_X=0$.  Information exists, but there is no redundant fact.  The converse is not general: $\phi_X=0$ may also reflect cancellation between redundant and synergistic contributions.

\emph{Perfect secret sharing:} no block reveals $X$, while the union determines it, so $\phi_X=-1$.  Here loss of the source key reduces rather than increases the locality gap.

\begin{figure}[H]
\centering
\includegraphics[width=\columnwidth]{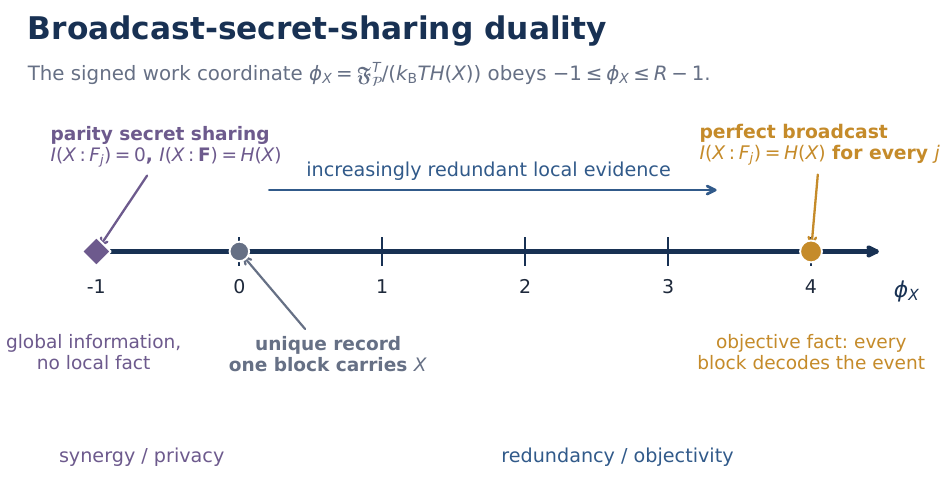}
\caption{The normalized factuality charge distinguishes redundant evidence from synergistic encoding.  For $R=5$ blocks, perfect broadcast saturates $\phi_X=4$, unique storage gives $0$, and parity secret sharing saturates $-1$.}
\label{fig:duality}
\end{figure}

This duality connects classical objectivity to the no-broadcasting theorem.
\begin{proposition}[Classicality of distributed assistance]
Suppose an unknown source memory is to be supplied nondestructively to two or more independent block controllers, with each controller receiving a complete local copy of the source state for every member of an ensemble $\{\sigma_x\}$.  Such exact catalytic distribution is possible only if the states $\sigma_x$ commute.
\end{proposition}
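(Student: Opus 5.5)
The plan is to reduce the statement to the no-broadcasting theorem of Barnum, Caves, Fuchs, Jozsa and Schumacher. First I will make the operational requirement precise. A \emph{nondestructive catalytic distribution} to $k\ge2$ independent block controllers is a single CPTP map
\[
\Lambda:\ \mathcal L(\mathcal H_S)\to\mathcal L(\mathcal H_S\otimes\mathcal H_{C_1}\otimes\cdots\otimes\mathcal H_{C_k}).
\]
It must satisfy $\Tr_{\neg S}\Lambda(\sigma_x)=\sigma_x$, so the source memory is returned unchanged. It must also satisfy $\Tr_{\neg C_j}\Lambda(\sigma_x)=\sigma_x$ for every controller $j$ and every $x$, so each controller holds a complete local copy. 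Because $\Lambda$ cannot depend on $x$ (the source is unknown), this is exactly a broadcasting map for the family $\{\sigma_x\}$ onto at least two output systems. Tracing out all but two of them, for instance $C_1$ and $C_2$, or $S$ and $C_1$, gives a two-party broadcasting map.

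The core step invokes the no-broadcasting theorem: a family of density operators can be broadcast by a single channel if and only if its members pairwise commute. I would either cite the theorem or sketch its proof. For a self-contained sketch I would use the fixed-point or monotonicity route. Let $\Lambda_1$ and $\Lambda_2$ be the two marginal channels of the broadcasting map. Each fixes every $\sigma_x$, and the same holds for a faithful state $\bar\sigma=\sum_x q_x\sigma_x$ with all $q_x>0$. By Lindblad's and Lindblad--Kuramoto-type results, the fixed-point algebra of a channel preserving a faithful state is a $*$-algebra $\mathcal M_j$ with a conditional expectation onto it. The states $\sigma_x$ are then of the form $\bar\sigma^{1/2}A\,\bar\sigma^{1/2}$ with $A$ in the fixed algebra.

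Broadcasting forces the relevant multiplicative domain to be abelian on the support of the family. The Schwarz-equality argument runs as follows. For $a\in\mathcal M_1$ the joint map copies $a$ onto output 1. For $b\in\mathcal M_2$ it copies $b$ onto output 2, and these outputs sit on commuting tensor factors. The multiplicative-domain property then yields $ab=ba$, so the algebra generated by the $\sigma_x$ is commutative, i.e.\ $[\sigma_x,\sigma_{x'}]=0$.

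The main obstacle is this algebraic step, and in particular keeping it valid when the $\sigma_x$ are not of full support. I expect to handle that by restricting to the support projection of $\bar\sigma$. If I prefer to avoid operator-algebra machinery, the alternative is the original fidelity-monotonicity proof of Barnum et al., which handles the two-state case and then extends pairwise. Either way I would simply quote the theorem, since the proposition is a direct operational rephrasing of it.

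Finally, I will note the converse to make clear that the condition is sharp. Commuting states are simultaneously diagonalizable, so measuring in the common eigenbasis and fanning out the classical outcome broadcasts them to any number of controllers while returning the source. This is why only classical side information $X$, as used in $G_{\Pcal}^{X}$, can serve as catalytic assistance for partition-local resets.
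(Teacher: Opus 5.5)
Your proposal matches the paper's argument. You recast the catalytic distribution as an $x$-independent channel whose marginals on two outputs (say $C_1$ and $C_2$, or $S$ and $C_1$) each reproduce every $\sigma_x$, and then invoke the no-broadcasting theorem of Barnum et al.; the paper's proof is exactly that one-line citation applied to the side-information channel.

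Because you ultimately quote the theorem, your proof is complete. Your optional operator-algebraic sketch is loose at the end, though. Mutual commutation of $\mathcal M_1$ and $\mathcal M_2$ yields $[\sigma_x,\sigma_{x'}]=0$ only after you note that $\bar\sigma^{-1/2}\sigma_x\bar\sigma^{-1/2}\in\mathcal M_1\cap\mathcal M_2\subseteq\mathcal M_1\cap\mathcal M_1'$. That intersection is abelian and commutes with $\bar\sigma$, which gives the commutation.
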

This is the no-broadcasting theorem \cite{Barnum1996} applied to the side-information channel.  It shows that the classical register $X$ in the four-work protocol is not merely a simplifying notation: exact independent assistance cannot be distributed in the same way for an arbitrary unknown quantum source.  Independently, maximal positive factuality requires each record block to contain perfectly distinguishable conditional states.  A channel that produced that endpoint for arbitrary noncommuting inputs would again broadcast them.  The saturating variable must therefore be classical---equivalently, associated with a commuting pointer algebra.  In this precise sense, both the operational baseline and the positive endpoint of the thermodynamic axis select classical information, whereas the negative endpoint is the architecture exploited by privacy and error-correcting codes.

The factuality charge is invariant under local isometries and additive for independent record clouds with product access partitions.  It is not, in general, monotone under arbitrary local noise: local processing may destroy redundant evidence or, by removing synergistic details, change the signed balance.  We therefore do not identify it with a universal resource monotone.  It is instead the exact quantity measured by a specified thermodynamic comparison.

\section{Exactly solvable record models}
\label{sec:models}

\subsection{Perfect classical broadcast}

For $R$ copies of an arbitrary classical variable,
\begin{equation}
P(f_1,\ldots,f_R)=\sum_xp_x\prod_{j=1}^{R}\delta_{f_j,x}.
\end{equation}
Each marginal has entropy $H(X)$ and the joint cloud also has entropy $H(X)$.  Therefore
\begin{equation}
W_{\mathrm{loc}}^{\emptyset}-W_{\mathrm{glob}}^{\emptyset}
=(R-1)\kB T H(X).
\end{equation}
For an unbiased bit, a global controller reversibly maps $|x\rangle^{\otimes R}$ to $|x\rangle|0\rangle^{\otimes R-1}$ and erases one bit; isolated controllers erase $R$ bits.

\subsection{Independent binary-symmetric records}

Let $X$ be an unbiased bit and
\begin{equation}
F_j=X\oplus N_j,
\qquad N_j\stackrel{\mathrm{iid}}{\sim}\mathrm{Bernoulli}(\epsilon).
\end{equation}
The records are conditionally independent, so $G^{X}=0$ and $G^{\emptyset}=\Fact^T$.  Since
\begin{equation}
I(X{:}F_j)=\ln2-h_2(\epsilon),
\end{equation}
the exact result is
\begin{equation}
\boxed{
\frac{\Fact_R^T}{\kB T}
=R\left[\ln2-h_2(\epsilon)\right]-\ln2+H(X|F_{1:R}).
}
\label{eq:bsc}
\end{equation}
The remaining conditional entropy is a one-dimensional binomial sum derived in \cref{app:models}.  As $R$ grows, the joint decoder identifies $X$ with vanishing error for every $\epsilon<1/2$, while each additional noisy record contributes asymptotically $\ln2-h_2(\epsilon)$ nats of factuality.

\begin{figure}[H]
\centering
\includegraphics[width=\columnwidth]{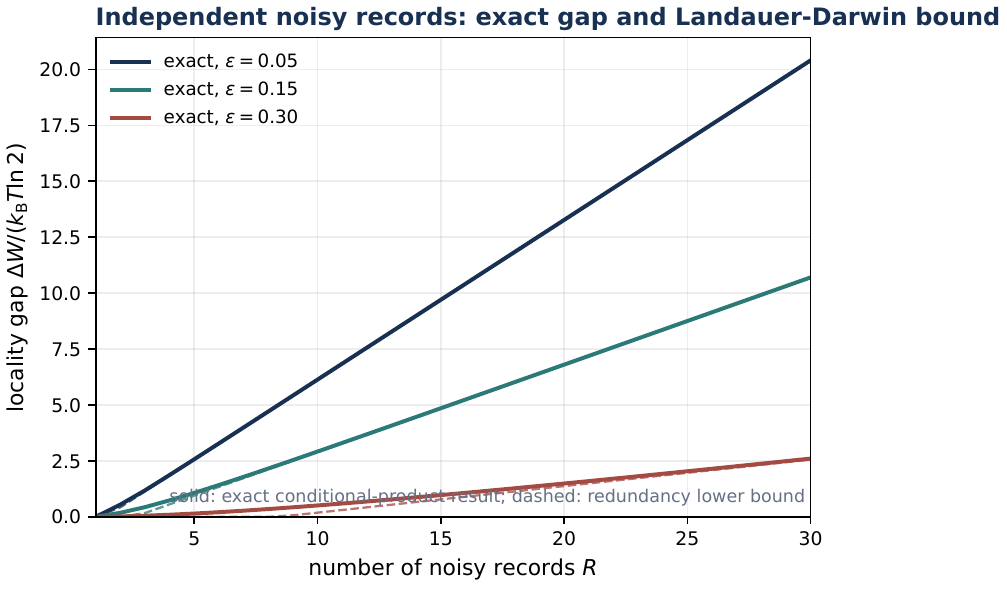}
\caption{Exact factuality/locality gap for independent binary-symmetric records.  Solid curves show \cref{eq:bsc}; dashed curves show the lower bound obtained by replacing $I(X{:}F_{1:R})$ with $H(X)=\ln2$.}
\label{fig:noisy}
\end{figure}

\subsection{Pure-state quantum collision model}

Consider equal-prior branches whose $j$th environmental record is $|e_0\rangle$ or $|e_1\rangle$, with real overlap
\begin{equation}
c=|\langle e_0|e_1\rangle|\in[0,1].
\end{equation}
The conditional cloud states are products $|e_x\rangle^{\otimes R}$, so $\Tcorr(\bF|X)=0$.  A mixture of two equiprobable pure states with overlap $q$ has entropy $h_2[(1+q)/2]$.  Hence
\begin{equation}
\boxed{
\frac{G_R^{\emptyset}}{\kB T}
=\frac{\Fact_R^T}{\kB T}
=R h_2\!\left(\frac{1+c}{2}\right)
-h_2\!\left(\frac{1+c^R}{2}\right).
}
\label{eq:collision}
\end{equation}
The orthogonal limit $c=0$ recovers $(R-1)\ln2$.  For weak individual records $c\lesssim1$, a macroscopic work asymmetry still develops through repeated broadcasting.

\begin{figure}[H]
\centering
\includegraphics[width=\columnwidth]{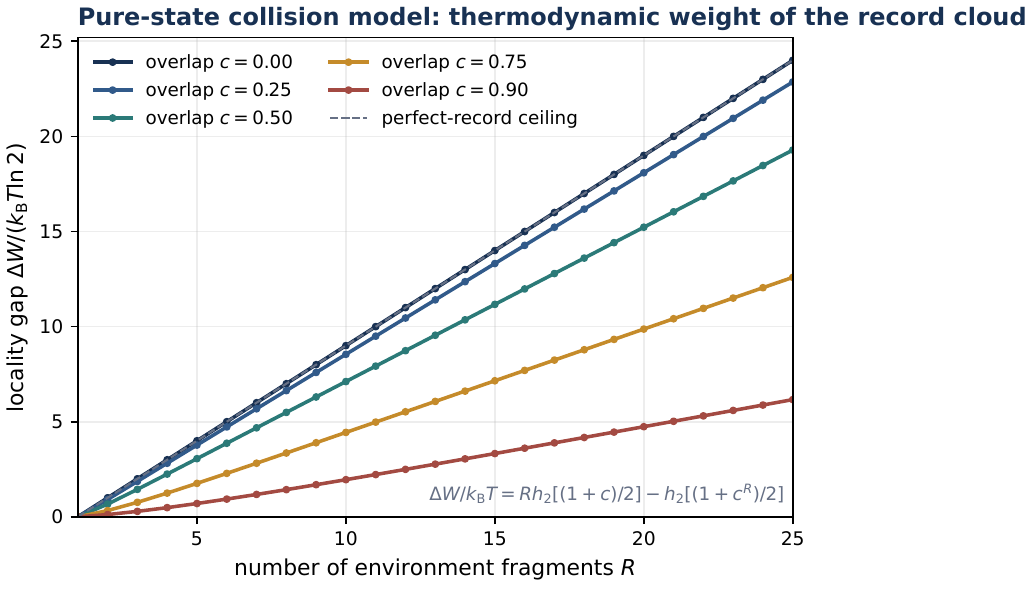}
\caption{Exact source-loss charge for the pure-state collision model.  The dashed ceiling is perfect one-bit broadcasting, $(R-1)\kB T\ln2$.}
\label{fig:collision}
\end{figure}

\subsection{Parity secret sharing}

Let $F_1,\ldots,F_R$ be independent fair bits and define
\begin{equation}
X=F_1\oplus\cdots\oplus F_R.
\end{equation}
Unconditionally, $\Tcorr(\bF)=0$.  Conditioned on $X$, the parity constraint produces $\Tcorr(\bF|X)=\ln2$.  Thus
\begin{equation}
\Aarch_X(\bF)=-\ln2,
\qquad
\Fact^T=-\kB T\ln2,
\end{equation}
which saturates the universal lower bound.  The same event information that appears as an objective fact under broadcasting appears as a private global constraint under parity encoding.

All exact identities were independently stress-tested on $20{,}000$ random classical distributions and $2{,}000$ random noncommuting cq ensembles.  Maximum residuals were below $1.1\times10^{-14}$; methods, fixed seed, diagnostic plots, and full results are given in \cref{app:numerics} and the reproducibility archive.

\section{Experimental protocol}
\label{sec:experiment}

The central quantity is a four-work double difference and can be measured without assigning an ontology to ``information.''  Prepare the same cq record cloud repeatedly and estimate:
\begin{equation}
W_{\Pcal}^{\emptyset},\quad W_{\mathrm{glob}}^{\emptyset},
\quad W_{\Pcal}^{X},\quad W_{\mathrm{glob}}^{X}.
\end{equation}
The first pair resets the cloud without the event label.  The second pair uses a classical control line carrying $x$, returned to its initial state.  Then
\begin{equation}
\Fact_{\Pcal}^{T}
=(W_{\Pcal}^{\emptyset}-W_{\mathrm{glob}}^{\emptyset})
-(W_{\Pcal}^{X}-W_{\mathrm{glob}}^{X}).
\label{eq:fourwork}
\end{equation}
A reversible global protocol first compresses common information; a partition-local protocol deliberately forbids cross-block gates.  Work can be inferred from a calibrated work storage system, quasistatic free-energy integration, or fluctuation-relation estimators.  Finite-time dissipation produces an excess above the reversible values and should be removed by protocol-time extrapolation or separately modeled \cite{Berut2012,Hong2016,Chiribella2022,LatuneElouard2025}.

A minimal demonstration uses $R$ qubit memories.  A controlled fanout prepares the classical mixture of $|0\rangle^{\otimes R}$ and $|1\rangle^{\otimes R}$.  Independent reset costs $R\kB T\ln2$ for degenerate memories, whereas a global inverse fanout followed by one reset costs $\kB T\ln2$.  Supplying $X$ allows every record to be reversibly uncomputed, giving $G^X=0$.  The predicted double difference is therefore $(R-1)\kB T\ln2$ with no fitted constant.

Superconducting circuits offer programmable fanout, controllable global versus local gates, and cryogenic reset; trapped ions and photonic registers offer high-fidelity record preparation; spin environments around nitrogen-vacancy centers already provide experimental access to quantum-Darwinism observables \cite{Unden2019}.  A decisive test should vary $R$, record noise, and the block partition independently.  The theory predicts the full staircase in \cref{fig:partition}, the noisy curves in \cref{fig:noisy}, and the quantum-overlap law in \cref{eq:collision}.

\section{Causal sealing and residual coherence}
\label{sec:causal}

The work law does not assert that unitary quantum information is destroyed.  In a closed system, branching may merely move phase information into inaccessible correlations.  Consider
\begin{equation}
|\Psi\rangle
=\frac{1}{\sqrt2}\left(
|0\rangle\bigotimes_{j=1}^{R}|e_j^{0}\rangle
+|1\rangle\bigotimes_{j=1}^{R}|e_j^{1}\rangle
\right).
\label{eq:branchstate}
\end{equation}
If a set $U$ of fragments is inaccessible and traced out, the off-diagonal branch operator on the accessible system is multiplied exactly by
\begin{align}
C_U&=\prod_{j\in U}\langle e_j^{1}|e_j^{0}\rangle,\notag\\
|C_U|&=\exp\!\left[-\frac12\sum_{j\in U}\xi_j\right],
\qquad
\xi_j=-2\ln|\langle e_j^{1}|e_j^{0}\rangle|.
\label{eq:coherencefactor}
\end{align}
This is a statement about the reduced state after passive inaccessibility, not a universal upper bound on every assisted recovery protocol.

\begin{figure}[H]
\centering
\includegraphics[width=\columnwidth]{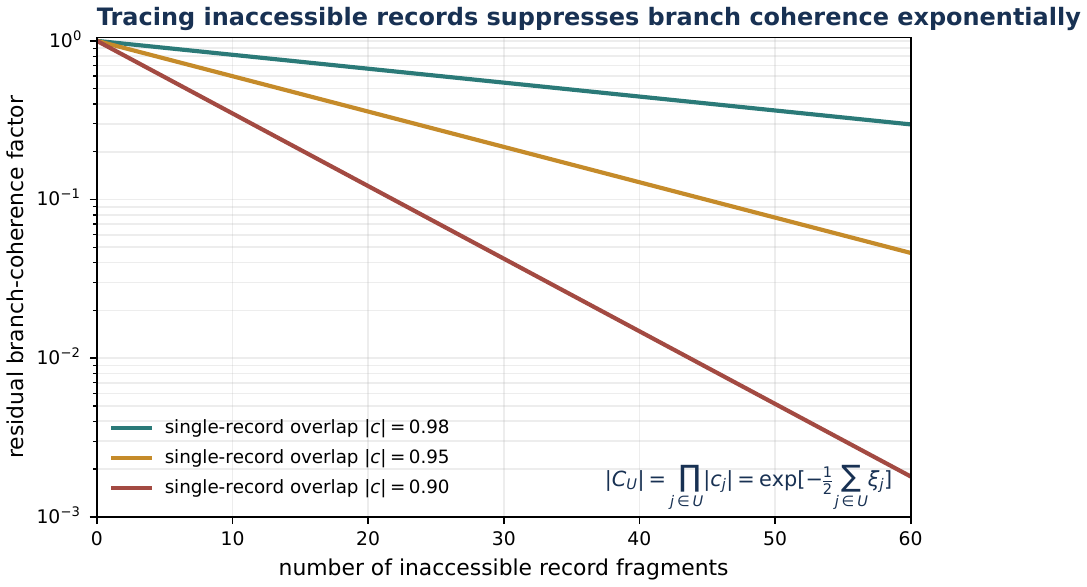}
\caption{Exact residual branch-coherence factor after tracing $|U|$ identical inaccessible records.  Even weak single-record distinguishability produces exponential suppression when many records escape control.}
\label{fig:coherence}
\end{figure}

The thermodynamic law suggests a controller-relative extension.  Let $\mathfrak R_\Gamma(t)$ be the family of partitions that an agent or apparatus $\Gamma$ can realize at or after time $t$, given locality, causal propagation, and finite control resources.  In a broadcast-like regime, define the residual factuality barrier
\begin{equation}
\mathfrak C_\Gamma(X,t)
=\inf_{\Pcal\in\mathfrak R_\Gamma(t)}\pos{\Fact_{\Pcal}^{T}(X)}.
\label{eq:causalseal}
\end{equation}
We call an event \emph{causally sealed for $\Gamma$} when $\mathfrak C_\Gamma(X,t)>0$.  This means no future admissible controller can reunite enough records to remove the positive locality penalty.  The definition is operational and compatible with global unitarity.  Whether it yields useful bounds in relativistic quantum field theory, horizon physics, or many-body systems with Lieb--Robinson cones \cite{LiebRobinson1972} is an open problem.

A second, explicitly conjectural direction is pointer selection.  In branching regimes with a fixed resource budget, one may test whether the dynamically selected pointer observable $\Pi_*$ maximizes the production rate of positive factuality,
\begin{equation}
\Pi_*\stackrel{?}{\in}\arg\max_{\Pi}
\frac{\dd}{\dd t}\Fact_{\Pcal}^{T}(X_\Pi{:}\bF_t).
\label{eq:pointerconjecture}
\end{equation}
This is not assumed in any theorem above.  It is a falsifiable proposal for connecting einselection to thermodynamic accessibility rather than to decoherence rate alone.

\section{Relation to prior work}
\label{sec:prior}

Several ingredients have established precedents.  Total correlation quantifies departure from a product state \cite{Watanabe1960}; global correlations can support work extraction that is unavailable under local operations \cite{Oppenheim2002,Groisman2005,Horodecki2005}; modular processing can dissipate correlation free energy \cite{Boyd2018}; and recent work gives general global--local control-cost relations for multipartite quantum systems \cite{Guan2025}.  Accordingly, \cref{eq:localitygap} should be read as the reversible-reset specialization that fixes the operational baseline, not as a claim that correlation free energy is newly discovered.

Quantum Darwinism and spectrum broadcast structure identify redundant, independently accessible records as the mechanism of objectivity \cite{Zurek2009,ZwolakRiedelZurek2014,Korbicz2014,Brandao2015,LeOlayaCastro2019,Korbicz2021,Touil2022}.  Latune and Elouard analyze the thermodynamic cost of measurements in a bath that redundantly records the outcome \cite{LatuneElouard2025}.  Complexity-constrained thermodynamics and dynamical min-entropy quantify erasure when the permitted process itself is restricted \cite{Munson2025,Badhani2026}.  Classical and quantum interventions are also thermodynamically constrained \cite{MilburnShrapnel2018}.

The present construction differs in four linked respects.  First, it compares access architectures on the \emph{same} record cloud.  Second, it subtracts a source-assisted baseline, isolating correlations specifically attributable to the event.  Third, the resulting signed charge has exact broadcast and secret-sharing endpoints.  Fourth, standard Darwinian redundancy immediately yields the work bound \cref{eq:landauerdarwin}.  These steps turn objectivity from a statement solely about accessible information into a quantitative work asymmetry under local control.

\section{Scope and limitations}
\label{sec:limitations}

The law is exact but not assumption free.  The principal limitations are as follows.

First, reversible free-energy differences describe asymptotic or quasistatic control.  One-shot errors, finite baths, coherence constraints, and finite-time protocols require smooth entropies or additional dissipation terms.  Second, ``partition local'' means genuinely independent block protocols.  Classical communication, a shared controller memory, or low-depth cross-block gates can partially unlock correlations and interpolate between the two baselines.  Third, the source $X$ is classical side information treated catalytically; the cost of establishing the source channel is outside the double difference and must be kept identical across comparisons.  Fourth, nonadditive interaction Hamiltonians generate energetic cross terms, so the clean entropy-only gap must be replaced by a correlation free-energy expression.  Fifth, $\Aarch_X^{\Pcal}$ is a signed architecture diagnostic, not a unique partial-information redundancy measure and not a monotone under all local channels.  Its sign measures a net balance; away from the rigid endpoints, zero does not uniquely imply nonredundant storage and a positive value alone does not replace an explicit accessibility criterion.

Most importantly, no theorem here derives the Born rule, modifies the Schr\"odinger equation, or introduces a physical collapse process.  The exact result is narrower and testable: once an event has a specified record architecture, restricted access produces a calculable reversible work asymmetry.  The causal-sealing and pointer-selection ideas are proposed extensions, not established consequences.

\section*{finite-resolution factuality certification}

The exact work identity is experimentally useful only if its sign and
magnitude are stable under state-preparation and tomography error.  A
dimension-explicit continuity certificate follows directly from sharp
entropy continuity.

\begin{theorem}[Continuity of thermodynamic factuality]
\label{thm:factuality-continuity}
Let \(\rho_{X\bF}\) and \(\sigma_{X\bF}\) be two finite-dimensional cq
record states on total Hilbert-space dimension \(D\), and suppose
\[
 \frac12\|\rho-\sigma\|_1\le\varepsilon
 \le 1-\frac1D.
\]
For a partition \(\Pcal\) into \(m\) blocks, define
\[
 g_D(\varepsilon)
 =\varepsilon\log(D-1)+h_2(\varepsilon).
\]
Then
\[
 \left|
 \Fact_{\Pcal}^{T}(X)_\rho-\Fact_{\Pcal}^{T}(X)_\sigma
 \right|
 \le
 3(m+1)\kB T\,g_D(\varepsilon).
\]
In particular, an observed positive factuality larger than the
right-hand side certifies a positive locality gap for every state in the
trace-distance error ball.
\end{theorem}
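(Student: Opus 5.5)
The plan is to expand the factuality charge \cref{eq:factuality} into a signed sum of von Neumann entropies of reduced states of $\rho_{X\bF}$, and then to bound each entropy difference separately with the sharp Fannes--Audenaert inequality. Writing every mutual information as $I(X{:}Y)=S(X)+S(Y)-S(XY)$, the definition gives
\begin{equation*}
\frac{\Fact_{\Pcal}^{T}}{\kB T}
=(m-1)S(X)+\sum_{B\in\Pcal}S(F_B)-\sum_{B\in\Pcal}S(XF_B)-S(\bF)+S(X\bF).
\end{equation*}
This is a linear combination of entropies of subsystems. The absolute values of the coefficients add up to $(m-1)+m+m+1+1=3m+1$. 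The difference $\Fact_\rho-\Fact_\sigma$ is therefore bounded by the triangle inequality as $\kB T\sum_k |c_k|\,|S(A_k)_\rho-S(A_k)_\sigma|$.

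For each subsystem $A_k$, monotonicity of the trace distance under partial trace gives $\tfrac12\|\rho_{A_k}-\sigma_{A_k}\|_1\le\varepsilon$. The Audenaert form of the Fannes inequality then gives $|S(A_k)_\rho-S(A_k)_\sigma|\le \varepsilon\log(d_k-1)+h_2(\varepsilon)$ whenever $\varepsilon\le 1-1/d_k$, where $d_k\le D$ is the dimension of the subsystem.

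The one point that needs care, and which I expect to be the main (minor) obstacle, is replacing $d_k$ by $D$ uniformly when the hypothesis only guarantees $\varepsilon\le 1-1/D$.
\begin{itemize}
\item If $\varepsilon\le 1-1/d_k$, then $g_{d_k}(\varepsilon)\le g_D(\varepsilon)$ simply because $\log(d-1)$ increases with $d$.
\item If $1-1/d_k<\varepsilon\le 1-1/D$, I use the trivial bound $|\Delta S|\le\log d_k$. Note that $g_{d_k}(1-1/d_k)=\log d_k$. Also, $g_D$ is nondecreasing on $[0,1-1/D]$, since its derivative $\log(D-1)+\log\frac{1-\varepsilon}{\varepsilon}$ is nonnegative there. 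Hence
\begin{equation*}
\log d_k=g_{d_k}(1-1/d_k)\le g_D(1-1/d_k)\le g_D(\varepsilon).
\end{equation*}
\end{itemize}
In both cases each entropy difference is at most $g_D(\varepsilon)$. Summing gives
\begin{equation*}
|\Fact_\rho-\Fact_\sigma|\le(3m+1)\kB T\,g_D(\varepsilon)\le 3(m+1)\kB T\,g_D(\varepsilon).
\end{equation*}
One could sharpen the constant using the cq structure, for example by writing $S(X)$ as a Shannon entropy of the $X$ marginal. That is not needed for the stated bound.

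For the certification clause, suppose the observed state $\sigma$ has $\Fact_{\Pcal}^{T}(X)_\sigma>3(m+1)\kB T\,g_D(\varepsilon)$. The continuity bound then gives $\Fact_{\Pcal}^{T}(X)_\rho>0$ for every $\rho$ in the error ball. By the source-loss law, $G_{\Pcal}^{\emptyset}=\Fact_{\Pcal}^{T}+G_{\Pcal}^{X}$. Since $G_{\Pcal}^{X}=\kB T\,\Tcorr_{\Pcal}(\bF|X)\ge0$ by \cref{eq:conditionalgap} and \cref{eq:condtotalcorr}, it follows that $G_{\Pcal}^{\emptyset}\ge\Fact_{\Pcal}^{T}>0$ throughout the ball, which is the claimed certification.
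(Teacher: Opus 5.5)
Your proof is correct and follows the same route as the paper: monotonicity of the trace distance under partial trace, then the Audenaert--Fannes bound applied to each marginal entropy. Because you collect the $S(X)$ terms before using the triangle inequality, you get the slightly sharper constant $3m+1$ in place of the paper's $3(m+1)$, which counts three entropies per mutual information; your case split also makes explicit the replacement of each marginal dimension by $D$, which the paper only asserts, and you spell out the certification step via $G_{\Pcal}^{X}\ge 0$.
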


\begin{proof}
Trace distance cannot increase under partial trace.  The
Audenaert--Fannes bound therefore changes the entropy of every marginal by
at most \(g_D(\varepsilon)\), using the total dimension as a common upper
bound.  Each mutual information is a signed sum of three entropies, so its
change is at most \(3g_D(\varepsilon)\).  The architecture quantity is the
sum of \(m\) local mutual informations minus one joint mutual information.
Multiplication by \(\kB T\) proves the result.
\end{proof}

The bound is conservative because it ignores the smaller dimensions of
individual fragments.  Replacing each occurrence of \(g_D\) by the
corresponding marginal-dimension bound gives the sharper experimental
certificate.  No asymptotic or commuting-state assumption is used.

\section*{exact correction for interacting Hamiltonians}

The entropy-only locality gap has a closed extension when the Hamiltonian
contains cross-block interactions.

\begin{theorem}[Correlation free energy with interactions]
\label{thm:frontier-locality-interaction}
Let
\[
 H=\sum_{i=1}^m H_i+H_{\rm int}
\]
and let \(\rho\) be a state with marginals \(\rho_i\).  For the
nonequilibrium free energy
\(F_H(\omega)=\operatorname{Tr}(H\omega)-k_{\rm B}T S(\omega)\),
\begin{align*}
 &F_H(\rho)-F_H\!\left(\bigotimes_i\rho_i\right)\\
 &\quad=
 k_{\rm B}T\,D\!\left(
 \rho\,\middle\|\,\bigotimes_i\rho_i\right)\\
 &\qquad+
 \operatorname{Tr}\!\left[
 H_{\rm int}\!
 \left(\rho-\bigotimes_i\rho_i\right)\right].
\end{align*}
The interaction correction obeys
\[
 \left|\operatorname{Tr}\!\left[
 H_{\rm int}\!
 \left(\rho-\bigotimes_i\rho_i\right)\right]\right|
 \le
 2\|H_{\rm int}\|.
\]
\end{theorem}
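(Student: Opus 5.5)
The plan is to expand both free energies directly and split the energy and entropy contributions. Write $\sigma\equiv\bigotimes_i\rho_i$. By linearity of the trace,
\[
F_H(\rho)-F_H(\sigma)=\Tr\!\left[H(\rho-\sigma)\right]-\kB T\left[S(\rho)-S(\sigma)\right].
\]

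\emph{Local energy terms.} The first key observation is that every local term $H_i$ (tensored with identities on the other blocks) has the same expectation in $\rho$ and in $\sigma$. The reason is that $\Tr(H_i\rho)=\Tr(H_i\rho_i)=\Tr(H_i\sigma)$, because the $i$th marginal of $\sigma$ is $\rho_i$ by construction. Hence $\Tr[H(\rho-\sigma)]=\Tr[H_{\rm int}(\rho-\sigma)]$, which is the same cancellation used for the locality-gap law \cref{eq:localitygap}. Only the interaction energy survives.

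\emph{Entropy terms.} The entropy of a tensor product is additive, so $S(\sigma)=\sum_iS(\rho_i)$. Therefore
\[
-\left[S(\rho)-S(\sigma)\right]=\sum_iS(\rho_i)-S(\rho).
\]
This is the partition total correlation, and by \cref{eq:totalcorr} it equals $D(\rho\|\sigma)$. If I want to avoid citing the identity, I would verify it in one line. Since $\log\sigma=\sum_i\mathbb 1\otimes\log\rho_i\otimes\mathbb 1$, we get $-\Tr\rho\log\sigma=\sum_iS(\rho_i)$, and so $D(\rho\|\sigma)=-S(\rho)+\sum_iS(\rho_i)$. Supports are not a concern: $\operatorname{supp}\rho\subseteq\operatorname{supp}\sigma$ always holds for a product of marginals, so the relative entropy is finite. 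Multiplying by $\kB T$ and adding the interaction term gives the displayed identity.

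\emph{Norm bound.} For the estimate I would use Hölder's inequality, $|\Tr(AB)|\le\|A\|_\infty\|B\|_1$, which gives
\[
\left|\Tr\!\left[H_{\rm int}(\rho-\sigma)\right]\right|\le\|H_{\rm int}\|\,\|\rho-\sigma\|_1.
\]
The triangle inequality then gives $\|\rho-\sigma\|_1\le\|\rho\|_1+\|\sigma\|_1=2$, since both operators are density operators.

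There is no genuine obstacle in this argument. The only point needing care is the local-energy cancellation, since it must hold for arbitrary, possibly non-commuting, correlated $\rho$. Marginal matching is exactly what guarantees this, and it is worth stating explicitly. The constant $2$ could be sharpened: because $\rho-\sigma$ is traceless, $H_{\rm int}$ may be replaced by $H_{\rm int}-c\mathbb 1$, which yields the bound $(\lambda_{\max}-\lambda_{\min})\tfrac12\|\rho-\sigma\|_1$. The stated bound does not require this refinement.
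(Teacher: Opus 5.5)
Your proof is correct and follows the same route as the paper. Marginal matching cancels the local energy terms, additivity of entropy on the product state turns the entropy difference into the total correlation $D(\rho\|\bigotimes_i\rho_i)$, and H\"older's inequality together with $\|\rho-\bigotimes_i\rho_i\|_1\le2$ gives the bound. Your explicit support check and your tracelessness refinement to $\tfrac12(\lambda_{\max}-\lambda_{\min})\|\rho-\bigotimes_i\rho_i\|_1$ are both valid and go slightly beyond what the paper states.
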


\begin{proof}
The local energy terms cancel because \(\rho\) and
\(\bigotimes_i\rho_i\) have the same marginals.  The entropy difference
is the total correlation,
\[
 D\!\left(\rho\middle\|\bigotimes_i\rho_i\right)
 =\sum_iS(\rho_i)-S(\rho).
\]
This proves the identity.  H\"older's inequality and
\(\|\rho-\bigotimes_i\rho_i\|_1\le2\) give the bound.
\end{proof}

Thus the nonadditive-Hamiltonian case is not an uncontrolled caveat: the
same factuality law holds with one explicitly measurable interaction
energy, and the entropy-only sign is certified whenever its margin
exceeds \(2\|H_{\rm int}\|\).

\section*{Correlation-adaptive interaction control}

The universal \(2\|H_{\rm int}\|\) correction can be sharpened whenever
the record cloud has small or moderate total correlation.

\begin{theorem}[Pinsker-refined locality gap]
\label{thm:final-locality-pinsker}
In the setting of Theorem~\ref{thm:frontier-locality-interaction}, put
\[
 D_\rho=
 D\!\left(\rho\,\middle\|\,\bigotimes_i\rho_i\right).
\]
Then the interaction correction satisfies
\[
 \left|\operatorname{Tr}\!\left[
 H_{\rm int}\left(\rho-\bigotimes_i\rho_i\right)\right]\right|
 \le
 \|H_{\rm int}\|\min\{2,\sqrt{2D_\rho}\}.
\]
Consequently
\[
 F_H(\rho)-F_H\!\left(\bigotimes_i\rho_i\right)
 \ge
 k_{\rm B}T D_\rho-\|H_{\rm int}\|\sqrt{2D_\rho}.
\]
In particular, the interacting correlation free energy is strictly
positive whenever
\[
 D_\rho>2\left(\frac{\|H_{\rm int}\|}{k_{\rm B}T}\right)^2.
\]
\end{theorem}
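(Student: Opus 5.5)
We bound the interaction correction by Hölder's inequality and then replace the trace norm by quantum Pinsker. Write $\tau=\bigotimes_i\rho_i$. Hölder's inequality gives
\[
 \left|\operatorname{Tr}[H_{\rm int}(\rho-\tau)]\right|\le\|H_{\rm int}\|\,\|\rho-\tau\|_1 .
\]
We then bound $\|\rho-\tau\|_1$ in two ways and keep the smaller bound.

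The trivial bound is $\|\rho-\tau\|_1\le\|\rho\|_1+\|\tau\|_1=2$, since both are density operators. This reproduces the estimate of Theorem~\ref{thm:frontier-locality-interaction}.

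The correlation-adaptive bound is the quantum Pinsker inequality. With natural logarithms, as fixed throughout the paper, it reads
\[
 \tfrac12\|\rho-\tau\|_1^2\le D(\rho\|\tau),
 \qquad\text{so}\qquad
 \|\rho-\tau\|_1\le\sqrt{2D_\rho}.
\]
Here $D_\rho$ is finite, because $\operatorname{supp}\rho\subseteq\operatorname{supp}\tau$ whenever $\tau$ is the product of the marginals of $\rho$. It also equals the partition total correlation of \cref{eq:totalcorr}. Combining the two bounds gives the $\min\{2,\sqrt{2D_\rho}\}$ estimate.

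For the consequence, we substitute into the exact identity of Theorem~\ref{thm:frontier-locality-interaction}. The interaction term is bounded below by $-\|H_{\rm int}\|\min\{2,\sqrt{2D_\rho}\}$, which is at least $-\|H_{\rm int}\|\sqrt{2D_\rho}$. This yields
\[
 F_H(\rho)-F_H(\tau)\ge\kB T D_\rho-\|H_{\rm int}\|\sqrt{2D_\rho}.
\]

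Strict positivity follows by factoring the right-hand side as $\sqrt{D_\rho}\,\bigl(\kB T\sqrt{D_\rho}-\sqrt2\,\|H_{\rm int}\|\bigr)$. The stated condition $D_\rho>2(\|H_{\rm int}\|/\kB T)^2$ is exactly $\kB T\sqrt{D_\rho}>\sqrt2\,\|H_{\rm int}\|$. Under that condition $D_\rho>0$, so the first factor is positive, and the second factor is positive as well.

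The main point requiring care is the normalization in Pinsker's inequality. The constant $\sqrt{2D}$ is correct only for the trace norm $\|\cdot\|_1$ (not half of it) and for relative entropy in nats. I would cite the quantum Pinsker inequality in this form, or derive it by measuring in the Helstrom basis of $\rho-\tau$, which reduces it to the classical Pinsker inequality by data processing. The rest of the argument is bookkeeping.
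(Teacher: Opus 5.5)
Your proposal is correct and follows the paper's proof essentially step for step. Both use Hölder's inequality, then the trivial bound $\|\rho-\bigotimes_i\rho_i\|_1\le 2$ combined with quantum Pinsker $\|\rho-\bigotimes_i\rho_i\|_1\le\sqrt{2D_\rho}$, then substitution into the exact interaction identity, and finally the positive-root argument in $\sqrt{D_\rho}$. Your extra checks are correct and spell out what the paper leaves implicit: the nats/trace-norm normalization of Pinsker, and the support inclusion that makes $D_\rho$ finite.
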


\begin{proof}
H\"older's inequality bounds the correction by
\(\|H_{\rm int}\|\|\rho-\otimes_i\rho_i\|_1\).  Quantum Pinsker gives
\(\|\rho-\otimes_i\rho_i\|_1\le\sqrt{2D_\rho}\), while the trivial trace
distance bound is two.  Insert the sharper estimate into the exact
identity of Theorem~\ref{thm:frontier-locality-interaction}.  The last condition
is the positive-root inequality for the resulting quadratic in
\(\sqrt{D_\rho}\).
\end{proof}

The interaction tolerance now scales with the actually observed
correlation rather than with a state-independent worst case.

\section{Conclusion}
\label{sec:conclusion}

A redundant record is not merely the same information written many times.  It is one logical variable distributed so that no local controller can exploit all of its correlations.  This distinction gives objective evidence a precise thermodynamic signature.  The locality gap measures the total correlation hidden from a partitioned controller; subtracting the source-assisted gap removes unrelated correlations and yields the exact work and entropy-production factuality law
\begin{equation}
\Fact_{\Pcal}^{T}(X)
=\kB T\left[\sum_{B\in\Pcal}I(X{:}F_B)-I(X{:}\bF)\right].
\end{equation}
Perfect broadcast maximizes this charge, unique storage makes it vanish, and perfect secret sharing minimizes it.  Equivalently, the thermodynamic record number $\mathcal N_X^{\Pcal}=1+\Fact_{\Pcal}^{T}/(\kB T H(X))$ assigns the exact values $m$, $1$, and $0$ to those three architectures.  Quantum-Darwinism redundancy therefore implies a quantitative Landauer-scale work barrier, while no-broadcasting explains why its maximal endpoint is classical.

The central conceptual claim can be stated without metaphor: \emph{a thermodynamically public fact is a classical variable that is independently decodable from multiple record blocks and whose source loss creates a positive local--global erasure double difference}.  Whether the resulting controller-relative barrier can be promoted to a useful theory of causal irreversibility in quantum field theory remains open.  The finite-system law itself is closed, exact under explicit assumptions, numerically verified, and directly testable.

\appendix

\section{Information identities and equality conditions}
\label{app:information}

For a cq state, the mutual information is the Holevo quantity
\begin{equation}
I(X{:}Y)=S(\rho_Y)-\sum_xp_xS(\rho_{Y|x}).
\label{eq:cqmutual}
\end{equation}
Substituting \cref{eq:cqmutual} into \cref{eq:architecture} gives
\begin{align}
\Aarch_X^{\Pcal}
&=\sum_{B\in\Pcal}\left[S(F_B)-\sum_xp_xS(F_B)_{\rho_x}\right]
\notag\\
&\quad-\left[S(\bF)-\sum_xp_xS(\bF)_{\rho_x}\right]
\notag\\
&=\Tcorr_{\Pcal}(\bF)-\Tcorr_{\Pcal}(\bF|X),
\end{align}
which proves \cref{eq:decomposition}.

To prove the lower bound, use $I(X{:}F_B)\ge0$ and $I(X{:}\bF)\le H(X)$:
\begin{equation}
\Aarch_X^{\Pcal}\ge-H(X).
\end{equation}
For the upper bound, choose a block $B_*$ maximizing $I(X{:}F_B)$.  Data processing gives
\begin{equation}
I(X{:}\bF)\ge I(X{:}F_{B_*}),
\end{equation}
while every remaining block obeys $I(X{:}F_B)\le H(X)$.  Therefore
\begin{align}
\Aarch_X^{\Pcal}
&=\sum_{B\ne B_*}I(X{:}F_B)
+I(X{:}F_{B_*})-I(X{:}\bF)\notag\\
&\le(m-1)H(X).
\end{align}

Lower saturation requires simultaneously $I(X{:}F_B)=0$ for every block and $I(X{:}\bF)=H(X)$.  For cq states, the former is equivalent to $\rho_{F_B|x}=\rho_{F_B}$ for every $x$ with $p_x>0$; the latter is equivalent to perfect distinguishability of the global conditional states.  Upper saturation requires $I(X{:}F_B)=H(X)$ for every $B$.  Because each block is a channel output of the full cloud, this condition is also sufficient.  For a finite cq ensemble with nonzero priors, $I(X{:}F_B)=H(X)$ if and only if the supports of $\rho_{F_B|x}$ and $\rho_{F_B|x'}$ are orthogonal for $x\ne x'$.

For any anchor $B_0$, direct rearrangement gives
\begin{align}
(m-1)H(X)-\Aarch_X^{\Pcal}
&=\sum_{B\ne B_0}[H(X)-I(X{:}F_B)]\notag\\
&\quad+I(X{:}\bF)-I(X{:}F_{B_0}),\\
\Aarch_X^{\Pcal}+H(X)
&=\sum_{B\in\Pcal}I(X{:}F_B)+H(X)-I(X{:}\bF).
\end{align}
Data processing makes all terms nonnegative.  Multiplication by $\kB T$ proves \cref{eq:upperrigidity,eq:lowerrigidity} and their stability consequences.

\subsection{Invariance and additivity}

Local isometries preserve every entropy appearing in \cref{eq:architecture}; hence they preserve $\Aarch_X^{\Pcal}$ and $\Fact_{\Pcal}^{T}$.  Consider two statistically independent record systems, $(X,\bF)$ and $(Y,\bG)$, with product state and the disjoint-union partition $\Pcal\sqcup\mathcal Q$.  Mutual information is additive under tensor products, giving
\begin{equation}
\Aarch_{XY}^{\Pcal\sqcup\mathcal Q}(\bF\bG)
=\Aarch_X^{\Pcal}(\bF)+\Aarch_Y^{\mathcal Q}(\bG).
\end{equation}
The same relation holds for thermodynamic factuality at a common bath temperature.

\section{Proof of the thermodynamic laws}
\label{app:thermo}

With additive Hamiltonian and product blank,
\begin{equation}
\Ffree_T(\omega_{\bF})=\sum_{B\in\Pcal}\Ffree_T(\omega_{F_B}),
\end{equation}
and
\begin{equation}
\Tr(H_{\bF}\rho_{\bF})=\sum_{B\in\Pcal}\Tr(H_{F_B}\rho_{F_B}).
\end{equation}
Subtracting \cref{eq:wglobal} from \cref{eq:wlocal}, all blank and energy terms cancel:
\begin{align}
G_{\Pcal}^{\emptyset}
&=\Ffree_T(\rho_{\bF})-
\sum_{B\in\Pcal}\Ffree_T(\rho_{F_B})\notag\\
&=\kB T\left[\sum_{B\in\Pcal}S(F_B)-S(\bF)\right]\notag\\
&=\kB T\Tcorr_{\Pcal}(\bF).
\end{align}
This proves \cref{eq:localitygap}.

When $X$ is supplied, the same calculation is performed for every conditional state and averaged over $x$:
\begin{align}
G_{\Pcal}^{X}
&=\kB T\sum_xp_x
\left[\sum_{B\in\Pcal}S(F_B)_{\rho_x}-S(\bF)_{\rho_x}\right]\notag\\
&=\kB T\Tcorr_{\Pcal}(\bF|X).
\end{align}
Using \cref{eq:decomposition},
\begin{equation}
G_{\Pcal}^{\emptyset}-G_{\Pcal}^{X}
=\kB T\Aarch_X^{\Pcal}(\bF),
\end{equation}
which proves \cref{eq:factuality}; multiplying \cref{eq:architecturebounds} by $\kB T$ proves \cref{eq:bounds}.  Since entropy production for an isothermal transformation is $(W-\Delta \Ffree_T)/T$ and the global baselines attain the full-cloud free-energy change reversibly, \cref{eq:entropyproduction} follows immediately.

The proof also clarifies the role of energetic assumptions.  If the physical Hamiltonian contains cross-block interactions, the local--global difference becomes
\begin{equation}
G_{\Pcal}^{\emptyset}
=\kB T\Tcorr_{\Pcal}(\bF)+\Delta E_{\mathrm{int}}+\Delta F_{\mathrm{blank,int}},
\end{equation}
where the final two terms depend on how interactions are switched and on the blank Hamiltonian.  The entropy-only law is recovered when those terms vanish or are separately calibrated.

\section{Derivation of the objectivity bounds}
\label{app:bounds}

From \cref{eq:redundancycondition},
\begin{equation}
\sum_{j=1}^{R}I(X{:}F_{B_j})
\ge R(1-\delta)H(X).
\end{equation}
Since the union cannot contain more than $H(X)$ nats about a classical source,
\begin{equation}
\Aarch_X^{\Pcal}
\ge\left[R(1-\delta)-1\right]H(X),
\end{equation}
which proves \cref{eq:landauerdarwin}.  Also $\Tcorr=\Aarch+\Tcorr(\cdot|X)\ge\Aarch$ and $\Tcorr\ge0$, proving \cref{eq:rawbound}.

For a decoder $\widehat X_B$ obtained by measuring block $B$, data processing gives
\begin{equation}
I(X{:}F_B)\ge I(X{:}\widehat X_B).
\end{equation}
Fano's inequality gives
\begin{equation}
H(X|\widehat X_B)
\le h_2(p_{e,B})+p_{e,B}\ln(d_X-1),
\end{equation}
and hence \cref{eq:fano} after summing over blocks and using $I(X{:}\bF)\le H(X)$.

For the spectrum broadcast state \cref{eq:sbs}, orthogonality implies
\begin{align}
S(F_j)&=H(X)+\sum_xp_xS(\rho_{F_j|x}),\\
S(\bF)&=H(X)+\sum_xp_x\sum_{j=1}^{R}S(\rho_{F_j|x}).
\end{align}
Therefore $\Tcorr(\bF)=(R-1)H(X)$.  Conditional product structure gives $\Tcorr(\bF|X)=0$, proving \cref{eq:sbssaturation}.

\section{Chain rules and partition refinement}
\label{app:chainrules}

The total-correlation increment follows directly:
\begin{align}
\Tcorr(F_{1:n+1})-\Tcorr(F_{1:n})
&=S(F_{n+1})+S(F_{1:n})-S(F_{1:n+1})\notag\\
&=I(F_{n+1}{:}F_{1:n}).
\end{align}
For factuality,
\begin{align}
\Aarch_X(F_{1:n+1})-\Aarch_X(F_{1:n})
&=I(X{:}F_{n+1})\notag\\
&\quad-I(X{:}F_{n+1}|F_{1:n}),
\end{align}
which is \cref{eq:factincrementa}.  The standard interaction-information identity
\begin{equation}
I(X{:}A)-I(X{:}A|B)=I(A{:}B)-I(A{:}B|X)
\end{equation}
then gives \cref{eq:factincrementb}.

When a fine partition contains separate blocks $B$ and $C$ and the coarse partition replaces them by $BC$, the terms unrelated to $B,C$ cancel:
\begin{align}
\Tcorr_{\mathrm{fine}}-\Tcorr_{\mathrm{coarse}}
&=S(B)+S(C)-S(BC)\notag\\
&=I(B{:}C).
\end{align}
Applying the same identity conditionally on $X$ gives \cref{eq:partitioncond}, and subtracting the two gives \cref{eq:partitionfact}.

\section{Closed-form model calculations}
\label{app:models}

\subsection{Binary-symmetric record cloud}

For $R$ conditionally independent binary-symmetric records, let $K=\sum_{j=1}^{R}F_j$.  With uniform $X$,
\begin{align}
P(K=k)
&=\frac12\binom{R}{k}
\Bigl[\epsilon^k(1-\epsilon)^{R-k}\notag\\
&\qquad +(1-\epsilon)^k\epsilon^{R-k}\Bigr],\\
q_k\equiv P(X=1|K=k)
&=\left[1+
\left(\frac{\epsilon}{1-\epsilon}\right)^{2k-R}
\right]^{-1}.
\end{align}
Therefore
\begin{equation}
H(X|F_{1:R})
=\sum_{k=0}^{R}P(K=k)h_2(q_k),
\label{eq:bscsum}
\end{equation}
which inserted into \cref{eq:bsc} yields the plotted curves.

\subsection{Pure-state collision cloud}

For two equal-prior pure states $|\psi_0\rangle,|\psi_1\rangle$ with overlap $q$, the nonzero eigenvalues of
\begin{equation}
\frac12\left(|\psi_0\rangle\!\langle\psi_0|
+|\psi_1\rangle\!\langle\psi_1|\right)
\end{equation}
are $(1\pm q)/2$.  A single record therefore has entropy $h_2[(1+c)/2]$, whereas the joint conditional states have overlap $c^R$ and the cloud entropy is $h_2[(1+c^R)/2]$.  Substitution into \cref{eq:totalcorr} proves \cref{eq:collision}.

\subsection{Parity cloud}

Unconditionally, the $R$ fragments are independent fair bits, so $S(F_j)=\ln2$, $S(\bF)=R\ln2$, and $\Tcorr(\bF)=0$.  Conditioning on the parity fixes one global constraint: $S(\bF|X)=(R-1)\ln2$, while each conditional marginal remains fair.  Thus $\Tcorr(\bF|X)=\ln2$ and \cref{eq:decomposition} gives $\Aarch=-\ln2$.

\section{Numerical methods and reproducibility}
\label{app:numerics}

Classical trials sampled a random source alphabet of size $2$--$4$, two to five fragments of dimension $2$--$4$, and a strictly positive joint distribution from independent gamma variates.  For every trial the code evaluated $\Tcorr$, $\Tcorr(\cdot|X)$, all local mutual informations, and the joint mutual information directly from marginal distributions.  Independent random local energies, temperatures, and product blanks were then used to compare the free-energy work difference with $\kB T\Tcorr$.

Quantum trials sampled two or three source values and random density matrices on two to four qubits from normalized complex Wishart matrices of random rank.  Entropies were computed from Hermitian eigenspectra after numerical positivity clipping below $10^{-12}$.  Partial traces were performed tensorially, and no commutativity was imposed on the conditional states.

For seed $20260728$, the largest classical and quantum decomposition residuals were $2.67\times10^{-15}$ and $7.78\times10^{-16}$; the largest raw and source-loss work-gap residuals were $1.34\times10^{-15}$ and $1.00\times10^{-14}$; the endpoint-rigidity residuals were at most $1.78\times10^{-15}$; and the perfect-broadcast, parity-sharing, and collision-model residuals were $0$, $3.34\times10^{-16}$, and $1.78\times10^{-15}$, respectively.  Every tested inequality had nonnegative margin.  The reproducibility archive contains the exact scripts, package-light source, generated figures, and JSON output.  Numerical tests are not used as evidence for an unproved equality; they guard against implementation errors and verify the noncommuting examples.

\section{Residual coherence after inaccessible records}
\label{app:coherence}

Partition the environmental fragments in \cref{eq:branchstate} into accessible $A$ and inaccessible $U$.  Writing
\begin{equation}
|E_x\rangle=|E_x^A\rangle|E_x^U\rangle,
\end{equation}
the reduced state after tracing $U$ contains the off-diagonal term
\begin{equation}
\frac12|0\rangle\!\langle1|
\otimes|E_0^A\rangle\!\langle E_1^A|
\langle E_1^U|E_0^U\rangle.
\end{equation}
Product structure gives
\begin{equation}
\langle E_1^U|E_0^U\rangle
=\prod_{j\in U}\langle e_j^1|e_j^0\rangle,
\end{equation}
and taking the modulus proves \cref{eq:coherencefactor}.  No thermodynamic assumption enters this calculation.

\renewcommand{\bibfont}{\footnotesize}
\setlength{\bibsep}{0pt plus 0.3ex}
\bibliography{references}

\end{document}